\setlist{topsep=2pt,leftmargin=*}
    \definecolor{myred}{HTML}{ea4335}
    \definecolor{mygreen}{HTML}{41a756}
    \definecolor{myblue}{HTML}{4285f4}
\definecolor{@gray}{HTML}{edc3c5}
\renewcommand{\paragraph}[1]{\smallskip\noindent\textbf{#1.}}
\let\a\alpha
\let\b\beta
\let\g\gamma
\let\d\delta
\let\l\lambda
\newcommand{\N}{\mathbb{N}}
\newcommand{\R}{\mathbb{R}}
\DeclareMathOperator*{\argmin}{\arg\min}
\DeclareMathOperator*{\E}{\mathbb{E}}
\DeclareMathOperator*{\PR}{\mathbb{P}}
\newcommand{\Ex}[2][]{\E_{#1}\left[#2\right]}
\renewcommand{\Pr}[2][]{\PR_{#1}\left[#2\right]}
\newcommand{\One}[1]{\mathbb{1}\left[#1\right]}
\newcommand{\Line}[4]{%
    #1&%
    \;\ifthenelse{\isempty{#2}}{\phantom{=}}{#2}\;%
    #3%
    \ifthenelse{\isempty{#4}}{}{&&\qquad\left(#4\right)}%
}
\newcommand{\kmax}{k_{\max}}
\newcommand{\vstar}{v^{\star}}
\newcommand{\req}{\mathtt{Req}}
\newcommand{\blk}{\mathtt{Blk}}
\newcommand{\Blk}{\blk}
\newcommand{\calS}{\mathcal{S}}
\newcommand{\calF}{\mathcal{F}}
\newcommand{\dmmf}{\textsf{DMMF}\xspace}
\newtheorem{theorem}{Theorem}[section]
\newtheorem{lemma}[theorem]{Lemma}
\theoremstyle{definition}
\newtheorem{definition}{Definition}[section]
\newcommand{\citet}[1]{\textcite{#1}}
\title{Beyond Worst-Case Online Allocation via Dynamic Max-min Fairness}
\author{
    Giannis Fikioris
    \thanks{Supported in part by the Department of Defense (DoD) through the National Defense Science \& Engineering Graduate (NDSEG) Fellowship, the Onassis Foundation -- Scholarship ID: F ZS 068-1/2022-2023, the Google PhD Fellowship, and AFOSR grant FA9550-23-1-0068.}\\
    Cornell University\thanks{Part of this work was done while Giannis Fikioris was visiting Google as a Student Researcher in the Market Algorithms team.}\\
    \texttt{gfikioris@cs.cornell.edu}
    \and
    Siddhartha Banerjee
    \thanks{Supported in part by AFOSR grant FA9550-23-1-0068, ARO MURI grant W911NF-19-1-0217, NSF grants ECCS-1847393 and CNS-195599, and the Simons Institute for the Theory of Computing.}\\
    Cornell University\\
    \texttt{sbanerjee@cornell.edu}
    \and
    \'Eva Tardos
    \thanks{Supported in part by AFOSR grants FA9550-19-1-0183 and FA9550-23-1-0068.}\\
    Cornell University\\
    \texttt{eva.tardos@cornell.edu}
}
\date{\vspace{-20pt}}
\begin{document}

% space saving
\let\oldabovedisplayskip\abovedisplayskip
    \setlength{\abovedisplayskip}{\oldabovedisplayskip-3pt}
\let\oldbelowdisplayskip\belowdisplayskip
    \setlength{\belowdisplayskip}{\oldbelowdisplayskip-3pt}
\let\oldtextfloatsep\textfloatsep
    \setlength{\textfloatsep}{\oldtextfloatsep-10pt}

\maketitle{}

\begin{abstract}
    We study the allocation of shared resources over multiple rounds among competing agents, via the \emph{dynamic max-min fair} (DMMF) mechanism: the good in each round is allocated to the requesting agent with the least number of allocations received to date.
We show that in large markets when an agent has i.i.d. values across rounds, under mild distributional assumptions (e.g., bounded PDF function), the DMMF mechanism allows each agent to realize a $1 - o(1)$ fraction of her ideal utility -- her highest achievable utility given her nominal share of resources.
This guarantee holds under arbitrary behavior by other agents and is achieved by characterizing the agent's utility under a rich space of strategies, wherein an agent can tune how aggressive to be in requesting the item.
Our techniques also allow us to handle settings where an agent's values are correlated across rounds, thereby allowing an adversary to predict and block her future values.
By tuning the aggressiveness, an agent can guarantee $\Omega(\gamma)$ fraction of her ideal utility, where $\gamma\in [0, 1]$ is a parameter that quantifies dependence across rounds (with $\gamma = 1$ indicating full independence and lower values indicating more correlation).
Finally, we extend our efficiency results to the case of reusable resources, where an agent might need to hold the item over multiple rounds to receive utility.
Our results subsume previous guarantees obtained using a more complicated mechanism proving a half ideal utility guarantee under i.i.d. values sampled from worst-case distributions.
\end{abstract}

\section{Introduction}
\label{sec:intro}

Consider a principal (e.g., a system administrator) with some public resource repeatedly shared between a set of self-interested agents over multiple rounds.
The public resource could be a scientific instrument (e.g., mass spectrometer, gene sequencer, microscope, telescope, etc.), a computing cluster, donations for food banks, or something more commonplace like the nice big conference room.
The principal in charge of allocating these resources has to face up to multiple challenges:
\begin{itemize}
    \item \emph{Resource constraints:}
    A microscope can be used for one experiment at a time; a telescope can focus on one point in the sky; the conference room cannot accommodate two meetings simultaneously.

    \item \emph{Stochastic and uncertain demands:}
    Astronomical events like supernovae are unpredictable, appearing randomly and independently of past events.
    Conversely, higher than average traffic at a food bank might indicate higher demand the next day, making demands stochastic but not independent.
    Finally, these stochastic demands are often far from the worst-case examples assumed in theory.
    
    \item \emph{Inefficiency:} 
    If, for the sake of fairness, the nice conference room cannot be booked by the same person twice a week, it would be frustrating to look at its empty chairs while presenting from a nearby room with a faulty projector.

    \item \emph{Unfair use:}
    At any point, it should not be that a team's experiment has been rejected from the computing cluster $9$ times out of $10$ so far, while another team's has been accepted $19$ times out of $20$.
    Situations like this must be avoided even when one team tries to game the allocation rules.

    \item \emph{Simple non-monetary mechanism:}
    The researcher of a research lab cannot be charged real money to use the lab's resources.
    The allocation rule has to be understandable and depend only on the agents' preferences, which are expressed in non-monetary value and thus cannot be compared.
\end{itemize}

We model this allocation problem over $n$ agents, where agent $i$ has a private value for being allocated in each of the $T$ rounds.
The principal wants to be fair and cannot use money, so we cannot use auction-based mechanisms like VCG, which are truthful and efficient.
This leaves the principal with more ad-hoc approaches, which may be inefficient or unfair.
For example, priority-based policies work well for those with high priority but are unfair to others; round-robin or uniformly random allocation policies are fair but inefficient as they may allocate to an agent when she has no need.

An allocating mechanism needs to handle agents with different priorities.
Agent $i$'s \textit{fair share} $\a_i$ (with $\sum_i \a_i = 1$) quantifies that `priority' in the system.
For example, given $n$ research labs sharing an instrument, equalizing the fair shares ($\a_i = 1/n$) means the mechanism should favor all agents equally.
In general, however, research labs might have different sizes and require unequal use of the instrument; the (exogenously specified) fair shares then direct the mechanism to treat different labs unequally, proportional to their fair shares.

The agents' fair shares subsequently define our fairness metric.
For any agent $i$, her \emph{ideal utility} $\vstar_i(\a_i)$ is the maximum expected per-round utility she can get if there is no competition but constrained to receive the resource only on an $\a_i$ fraction of the rounds.
Consequently, a mechanism tries to maximize the fraction of ideal utility an agent can get.
Past work has focused on \textit{robust} utility guarantees, i.e., guarantees that hold under adversarial behavior by other agents.
This focus has been due to the uncertainty in how other agents behave: in practice, agents are not perfect optimizers and might not best respond (e.g., follow an equilibrium) as most theoretical results assume.
\citet{DBLP:conf/sigecom/GorokhBI21} first defined ideal utility, which subsequently became the standard fairness metric for this setting.
They show that under a repeated first-price auction with artificial currencies, each agent can realize at least half of her ideal utility robustly.
\citet{DBLP:conf/sigecom/BanerjeeFT23} generalize this result to the allocation of reusable resources, i.e., those that agents might use for multiple rounds, and prove that in such settings, this guarantee is tight.

\subsection{Our mechanism and results}
We study a simple and widely used mechanism, \emph{Dynamic Max-Min Fair} (\dmmf), rather than pseudo-markets with artificial currencies.
The \dmmf mechanism (\cref{algo:algo}) is a dynamic version of the classical max-min fair policy: \emph{in any round, given the agents requesting for the resource in that round, \dmmf allocates to the agent with the least number of past allocations, normalized by her fair share}.
We focus on the \dmmf mechanism because of its advantages over pseudo-market mechanisms.
Instead of declaring a real-valued bid in every round, agents in \dmmf declare only if they want the item in that round, making it easier to implement in practice.
In addition, \dmmf is work-conserving: if anyone wants the item, then someone gets it; in contrast, in pseudo-markets, agents can run out of budget early.

When agent $i$'s values are i.i.d. across rounds, we prove that she can guarantee at least $\frac{1}{2-\a_i} \vstar_i(\a_i) T$ total utility, irrespective of how others bid and her distribution.
This offers a slight improvement over the bounds of \cite{DBLP:conf/sigecom/GorokhBI21,DBLP:conf/sigecom/BanerjeeFT23}, who focus only on such worst-case guarantees, i.e., under arbitrary value distributions.
This $\frac{1}{2}$ guarantee is close to tight: no mechanism can guarantee every agent more than $1 - \frac{1}{e} \approx 0.63$ fraction of her ideal utility for every value distribution (see \cref{thm:imp:ind})\footnote{For reusable resource the half ideal utility guarantee is tight \cite{DBLP:conf/sigecom/BanerjeeFT23}.}.

While the above guarantee is significant, given the upper bound, focusing on worst-case guarantees may be too pessimistic for real-world applications.
Motivated by this, we focus on improved utility guarantees that depend on the agent's value distribution.
To the best of our knowledge, this is the first time such fine-grained analysis has been applied to similar settings.
This improvement starts by noticing that since $\vstar_i(\cdot)$ only depends on agent $i$'s value distribution, $\vstar_i(\b)$ can be computed for any $\b\in[0,1]$.
This motivates a rich space of strategies we refer to as \textbf{$\b$-aggressive strategies} (\cref{def:b-aggresive}), where an agent with fair share $\a_i$ acts as if her share is $\b$.
By setting $\b > \a_i$ (resp. $\b < \a_i$), the agent can become more (resp. less) assertive than she is entailed; however, her fair share $\a_i$ still determines her normalization in \dmmf as well as her benchmark $\vstar_i(\a_i)$, which is the maximum utility she should hope for.
We get the following results, which exhibit the generality of \dmmf and $\b$-aggressive policies.

\paragraph{Main focus: Fine-grained robust guarantees}
Our main focus is on agents with values i.i.d. across rounds.
We show (\cref{thm:single:guar}) that under \dmmf an agent $i$ with a $\b$-aggressive strategy, no matter how other agents behave, is guaranteed utility \emph{up to any round $\tau$} satisfying\footnote{For $m \in \N$ we denote $[m] = \{1, 2, \ldots, m\}$.}
\begin{equation}\label{eq:intro:general_bound}
    \sum_{t \in [\tau]} \Ex{\big.U_i[t]}
    \ge
    \frac{\a_i}{\a_i + \b - \a_i\b}
    \vstar_i(\b) \tau
    -
    \order{1}
\end{equation}
where $U_i[t]$ is agent $i$'s utility in round $t$.
This bound is not directly useful as it is expressed in terms of $\vstar_i(\b)$ and not the ideal utility $\vstar_i(\a_i)$.
However, it gives us a parameterized set of guarantees for all $\b$, which hold without the \dmmf mechanism having any knowledge about the agents (e.g., their distributions) and can be optimized by the agent to get setting-specific results.
We showcase this with multiple examples that exhibit the richness of $\b$-aggressive policies.
\begin{itemize}
    \item \textbf{Uniform distribution}:
    A demonstration of \cref{eq:intro:general_bound} is when agent $i$'s value is uniformly distributed and $\a_i$ is small ($\leq 1/8$): setting $\b = \sqrt{2\a_i}$ guarantees a $1 - \sqrt{2\a_i}$ fraction of her total ideal utility $\vstar_i(\a_i) \tau$.
    Note that the agent requests much more aggressively than her fair share suggests (since $\sqrt{2\a_i}\gg \a_i$).
    However, by doing so, she improves her guaranteed fraction of ideal utility from $1/2$ to $1 - \sqrt{2\a_i}$, which is \textbf{close to optimal} as $\a_i \to 0$ (i.e., in a `large market' setting). 
    
    \item \textbf{Bounded distributions} (\cref{cor:indep:density}):
    More generally, when an agent's value distribution pdf is lower and upper bounded by $\l_1, \l_2$, respectively, then choosing $\b = \sqrt{\nicefrac{2\a_i\l_1}{\l_2}}$ guarantees a $1 - \sqrt{\nicefrac{2\a_i\l_2}{\l_1}}$ fraction of her ideal utility.
    Similar to before, in large market settings, this is near optimal.

    \item \textbf{Bernoulli distributions} (\cref{cor:indep:bernoulli}): If the agent's value in each round is Bernoulli distributed (i.e., ON/OFF, or more generally, high/low), with probability of ON states being $p$, then by using a $p$-aggressive policy, the agent can realize a $\frac{\max\{\a_i, p\}}{\a_i + p - \a_i p}$ fraction of her ideal utility.
    In particular, if either $p \gg \a_i$ or $\a_i \gg p$ (i.e., the probability of ON is higher/lower than her fair share), then the above guarantee approaches $\vstar(\a_i) \tau - \order{1}$.
    This result shows the brittleness of worst-case results, which always concern Bernoulli distributions.
    For example, \cref{thm:imp:ind}, which shows no mechanism can guarantee more than $1 - 1/e$ fraction of ideal utility, uses Bernoulli distributions with exactly $p = \a_i$.

    \item \textbf{Worst-case distributions} (\cref{cor:indep:worst}):
    If we want results agnostic of the agent's distribution, then we can still set $\b = \a_i$ in~\cref{eq:intro:general_bound}, to realize a $\frac{1}{2-\a_i}$ fraction of $\vstar_i(\a_i) \tau$.
    This shows an agent can always guarantee a constant fraction of her ideal utility under \dmmf.
    On the negative side, this is a much smaller fraction of ideal utility than our distribution-aware guarantees.
    In addition, if every agent $i$ uses an $\a_i$-aggressive strategy (as suggested by the existing worst-case robustness guarantees), then a constant fraction of rounds are uncontested.
    This highlights the importance of our distribution-specific results and justifies the use of \dmmf in large markets.
\end{itemize}

Even in the worst-case setting (which was the sole focus of previous work), our mechanism improves in two critical ways: it gives \emph{anytime} guarantees (i.e., holding in every round $\tau$, not just the final round $T$) and has a better additive error ($\order{1}$ instead of $\order*{\sqrt T}$).
While these might seem negligible at first, they are essential in practice since they allow \dmmf to handle \emph{dynamic settings/populations}: at any round, the mechanism can be reset (e.g., to accommodate new agents) with minimal additional loss (see \cref{rem:single:dynamic_pop} for details).
In contrast, to minimize losses in pseudo-market mechanisms, the principal can only restart after a pre-determined final round, with higher additive loss.
This last point, along with the simplicity of \dmmf, motivates the use of our mechanism besides the distribution-aware utility guarantees, which we believe might also be achievable for pseudo-market mechanisms.

\paragraph{Price of Anarchy}
In symmetric settings (i.e., equal shares and identical value distributions across agents), robust guarantees on the ideal utility also imply bounds on the Price of Anarchy -- the inverse of the fraction of the optimal welfare achieved under any equilibrium. 
In particular, if every agent can robustly guarantee an $c \le 1$ fraction of her ideal utility, then the Price of Anarchy is at most $1/c$.
This means that in any Nash equilibrium, \dmmf is fair and efficient in large markets under mild assumptions on the value distributions (\cref{cor:single:welfare}). 
    
\paragraph{Correlated rounds}
We also extend our guarantees to handle time-dependent models.
We consider settings where agent $i$'s values are drawn via a hidden Markov model, with the underlying Markov chain over latent states parameterized by a \emph{decorrelation parameter} $\g_i \in (0,1]$.
We define this formally in~\cref{sec:setting}, but at a high level, $\g_i = 1$ indicates independent values, while as $\g_i$ decreases, states become more correlated across rounds. 
Previous work~\cite{DBLP:conf/sigecom/GorokhBI21,DBLP:conf/sigecom/BanerjeeFT23,gorokh2017,guo2010} in this setting focuses exclusively on i.i.d. values, and to the best of our knowledge, we are the first to consider values that can be correlated across rounds.
We show that under the same \dmmf mechanism, by using a $\b$-aggressive policy with appropriately chosen $\b \le \a_i$, agent $i$ can always guarantee a $\Omega(\g_i)$ fraction of her ideal utility (see \cref{cor:single:big_gamma,cor:single:small_gamma} for the detailed bounds).
We also provide a utility guarantee in \cref{thm:single:state_independent} that is irrespective of $\g_i$ and instead depends on the similarity of the agent's values in different states.

\paragraph{Multi-Round Demands}
In \cref{sec:mult}, we consider the setting of~\cite{DBLP:conf/sigecom/BanerjeeFT23} with multi-round demands (agents need the resource for intervals of rounds) with independent value distribution.
We recover our per-agent utility guarantees in~\cref{eq:intro:general_bound} (up to $\order*{\sqrt T}$ terms), with two caveats:
$(i)$ \dmmf needs to know the horizon $T$ and use a parameter $r$ to limit how many multi-round demands agents get, and
$(ii)$ the parameter $r$ is tuned according to $\b$ (see \cref{thm:mult:guar}). 
This means that to get distribution-aware guarantees, we need to have symmetric agents, and the principal needs to know the distribution and tune $r$ accordingly.
Setting $r$ independent of the value distributions also lets us recover the $\frac{1}{2}$ robustness guarantee under \dmmf, matching the guarantee for pseudo-markets shown in \cite{DBLP:conf/sigecom/BanerjeeFT23} (who further showed this is tight for \emph{any} mechanism under worst-case distributions).
    
\paragraph{Impossibility results}
Finally, in \cref{sec:imp}, we study impossibility results.
For single-round demands, we first show that no mechanism can guarantee every agent more than a $1 - \frac{1}{e}$ fraction of her ideal utility (\cref{thm:imp:ind}).
For the \dmmf mechanism, we show that agent $i$ cannot guarantee more than a $\frac{\g_i}{1+\g_i}$ fraction of her ideal utility, making our guarantee for dependant values tight up to a constant factor (\cref{cor:imp:markov}).
For multi-round demands, we show that the use of the parameter $r$ in \dmmf, as well as the need to tune it according to the aggressiveness level $\b$, is necessary: our bound in \cref{thm:mult:guar} is tight for every $r$ and $\b$, up to additive $\order*{\sqrt T}$ terms (\cref{thm:mult:imp}).
We believe that studying the limitations of the \dmmf mechanism is important, as it is often employed in practice \cite{DBLP:conf/osdi/VuppalapatiF0CK23}.

\subsection{Challenges and Technical Novelty}

First, we establish an invariant property (i.e., a `conservation law') that holds under \dmmf on \emph{any} sample path.
Our guarantees stem from such a property given in \cref{lem:single:dmmf}, where we show that for any agent $i$ and interval $[1, t]$, the ratio of the number of rounds that $i$ wins to the number of rounds in which she is `blocked,' is at least $\alpha_i/(1-\alpha_i)$ (i.e., the ratio of $i$'s fair share to that of everyone else).
While this should seem natural under \dmmf, formalizing it is subtle; it is critical to define `blocked' in a counterfactual sense: an agent is blocked in round $t$, if she would not receive the item conditioned on requesting, regardless of if she requested or not.

The above invariant implies that agent $i$ cannot win significantly fewer rounds relative to the number of her requests and fair share.
In the case of i.i.d. values, this share of winning rounds directly implies utility guarantees using a $\b$-aggressive strategy.
However, increasing $\b$ implies winning more rounds but being blocked for more rounds.
In \cref{ssec:single:indep}, we show how to pick $\b$ to get the right trade-off between the above two competing objectives.

When agent $i$'s values are correlated across rounds, other agents might be able to infer information about her values from past requests, being able to block her high-valued requests.
For a fixed decorrelation parameter $\g_i$, agent $i$ can use the $\a_i$-aggressive strategy to get some utility guarantees depending on $\g_i$; however, this guarantee turns out to be a $\Theta(\g_i^2)$ fraction of her ideal utility.
This is the outcome of information leakage when $\g_i < 1$: the other agents can now block agent $i$'s high-valued rounds \emph{without} blocking her low-valued rounds.
This has a twofold effect: agent $i$ loses utility \emph{and} wins rounds worth less, leading to quadratic dependence on $\g_i$.
In \cref{ssec:single:correl}, we show that agent $i$ can control this leakage by following a $\b$-aggressive strategy for $\b < \a_i$ and get $\Omega(\g_i)$ utility bounds.
This use of $\b$-aggressive strategies is another evidence of their importance.
In fact, we believe that they might be of independent interest outside of this paper.

\subsection{Related Literature}
% \label{sec:litrev}

Our work combines three separate considerations: dynamics (the `online' aspect), competing objectives (overall efficiency vs. individual guarantees), and competition (private values and robust guarantees).
These topics, and combinations thereof, have a long history of study

The closest work to ours is that of~\cite{DBLP:conf/sigecom/GorokhBI21,DBLP:conf/sigecom/BanerjeeFT23} who study robust guarantees of pseudo-market mechanisms when agents have i.i.d. valuations.
They only focus on guarantees for worst-case distributions and prove that an agent can always guarantee a $1/2$ fraction of her ideal utility; \cite{DBLP:conf/sigecom/GorokhBI21} study only single-round demands while \cite{DBLP:conf/sigecom/BanerjeeFT23} extend this guarantee to multi-round demands which they show is tight for any mechanism.
These works themselves build on earlier work on using pseudo-markets for online allocation~\cite{cavallo2014incentive,jackson,guo2010,gorokh2017}.
An important distinction, though, is that these earlier works study equilibrium outcomes in Bayesian settings instead of robust outcomes; moreover, all these works require agents to have i.i.d. valuations.
More recently, \cite{elokda2023self} study pseudo-markets where agent's values can be correlated across time; they prove the existence of a static equilibrium, and empirically study its performance in a few practical examples.
The interest in pseudo-markets stems from successful real-world deployments for course allocation~\cite{budish2017course}, food banks~\cite{walsh2014allocation,prendergast2022allocation} and cloud computing~\cite{dawson2013reserving,vasudevan2016customizable}.
Our results suggest that simpler mechanisms may suffice in some of these settings, and also that strong performance guarantees are achievable under more relaxed models of agent utility.

Next, we analyze work that is related to this paper but is further afield.

First, we have the study of fairness-efficiency tradeoffs in online allocation of divisible resources, starting with~\citet{kelly1998rate}, and 
with a recent renewal of interest due to data-centers~\cite{bonald2001impact,bonald2006queueing} and cloud computing~\cite{DBLP:conf/nsdi/GhodsiZHKSS10, DBLP:conf/osdi/ShueFS12, DBLP:conf/osdi/GrandlKRAK16,DBLP:journals/teco/ParkesP015}.
In particular, \cite{DBLP:journals/teco/ParkesP015} study the static allocation of multiple divisible resources and show that low social welfare is needed where requiring properties strategy-proofness, envy-freeness, or sharing incentives; they also study envy-freeness for indivisible resources.
Most of these works consider static settings or assume requests of equal importance.
Recent work has focused on extending these results to time-varying utilities~\cite{DBLP:conf/sigmetrics/FreemanZCL18,DBLP:journals/corr/FikiorisAT21,DBLP:conf/osdi/VuppalapatiF0CK23} or dynamic arrival of agents \cite{DBLP:conf/atal/KashPS13,DBLP:journals/pomacs/ImMMP20,DBLP:journals/mor/VardiPF22}, analyzing the incentive and efficiency properties of simple \dmmf-like policies for divisible resource allocation.
While the need for rationing in our setting makes it incompatible with these works, our idea of proving robust guarantees against per-agent benchmarks may also prove useful in those models.

Another stream of work has looked at efficiency-fairness tradeoffs in online allocation problems, but without incentive constraints (i.e., where values are known to the principal at the start of a round).
In this context, prior work has considered both stochastic~\cite{sinclair2022sequential,banerjee2023online,yin2022optimal} as well as adversarial~\cite{freeman2017fair,banerjee2022online,kawase2022online,barman2022universal,gkatzelis2023} utility models, and characterized various tradeoffs between different objectives.
This is also related to a much larger body of work looking at similar questions in static (i.e., one-shot) settings. A particularly relevant example of this for us is recent work by~\citet{DBLP:conf/sigecom/BabaioffEF21,DBLP:conf/wine/BabaioffEF22}, who study the one-shot allocation of indivisible goods among agents with exogenous shares, and arbitrary values, and prove that every agent can guarantee a constant fraction of her \textit{Any Price Share} (APS), a quantity that could be described as a worst-case ideal utility. While their setting and results are, for the moment, incomparable to the one we consider, understanding how they are related is an important future direction.

Finally, considering incentives in online allocation places our work in the literature on dynamic mechanisms with state.
The challenge here arises from factors that couple decisions across time; e.g.,
incomplete information and learning~\cite{Iyer2014,devanur2015perfect,nekipelov2015econometrics,DBLP:conf/nips/BranzeiHPSW24,DBLP:conf/aaai/TamuzVZ18},
constraints like budget limits~\cite{gummadi2012repeated,leme2012sequential,balseiro2015repeated}, 
queues in a network allocation~\cite{DBLP:conf/sigecom/GaitondeT20,DBLP:conf/sigecom/GaitondeT21}, 
stochastic fluctuations~\cite{gershkov2009dynamic,bergemann2010dynamic},
etc. 
Analyzing equilibria in such repeated settings can be difficult, and our work contributes to the small but growing set of techniques available for this; in particular, our idea of obtaining results that interpolate between constant factors in worst-case settings to near-optimal efficiency in more realistic settings is reminiscent of the work on Price of Anarchy in large markets~\cite{feldman2016price}.

\section{Online Public Resource Allocation -- Single Round Demands}
\label{sec:setting}

This section formalizes our model, mechanism, and benchmark when agents have single-round demands.
We develop our results in the general setting where an agent's value in different rounds can be correlated.
The results for independent demands follow as special cases.

\subsection{Agent Model}

The principal has a single resource in each of $T$ rounds to allocate between $n$ agents.
Agents have \emph{single-round demands}: in every round $t\in [T]$, each agent $i\in [n]$ samples a random \emph{value} $V_{i}[t] \in \R_+$: if she is allocated the item on round $t$ then she gets $V_{i}[t]$ utility in that round.
In \cref{sec:mult}, we extend this to \emph{multi-round} demands, where an agent may need the resource for more than one contiguous round to satisfy her demand.

We assume that each agent's value is drawn from an underlying distribution, which is \emph{independent across agents} but possibly \emph{dependent across rounds}.
The dependence follows a Hidden Markov Model: agent $i$ in round $t$ has some latent state $S_i[t]$ drawn from some discrete and finite state-space $\mathcal S_i$.
In round $t$, agent $i$'s latent state $S_i[t]$ is generated based on $S_i[t-1]$ via a time-invariant Markov chain with transition kernel $p_i(\cdot,\cdot)$ (i.e., the transition probability from state $s'$ to $s$ is given by $p_i(s',s)$).
We assume this Markov chain is ergodic (i.e., irreducible and aperiodic), thereby converging to a unique stationary distribution $\pi_i$ from any starting state.
We assume the state in the first round $S_i[1]$ is sampled from the stationary distribution $\pi_i$\footnote{Assuming that $S_i[1]$ is arbitrarily picked only incurs a constant additive error as long as the mixing time is independent of $T$.}.

Each state $s_i \in \mathcal S_i$ is associated with a value distribution $\mathcal F_{i,s_i}$; given state $S_i[t]$ of agent $i$ in round $t$, $V_i[t]$ is sampled from $\mathcal F_{i,S_i[t]}$.
This sampling process is independent of other rounds: formally, \textit{conditioned on the value of $S_i[t]$}, the value $V_i[t]$ is independent of $V_1[1], \ldots, V_i[t-1]$ and $S_1[1], \ldots, S_i[t-1]$.
We define $\mathcal F_i$ as the induced value distribution when state $S_i[t]$ is sampled from $\pi_i$, the stationary distribution of agent $i$'s Markov chain; formally, $\mathcal F_i = \sum_s \pi_i(s) \mathcal F_{i,s}$.

Finally, our main robustness guarantees depend on a parameter that quantifies the amount of decorrelation across rounds.
For agent $i$, we define $\g_i \in [0, 1]$ as $\g_i = \min_{s',s\in \calS_i}\big(\nicefrac{p_i(s',s)}{\pi_i(s)}\big)$.
Thus $\g_i = 1$ corresponds to the i.i.d. value distribution case, since $p_i(s', s) = \pi_i(s)$ and therefore in each round $t$ the state $S_i[t]$ is sampled from $\pi_i$, regardless of previous rounds.
In contrast, if $\g_i = 0$, then some states are perfectly correlated with the following ones.

\subsection{The Dynamic Max-Min Fair Mechanism}
\label{ssec:mechanism}

Our work focuses on understanding the performance of the \textit{Dynamic Max-Min Fair (\dmmf) mechanism}.
To define \dmmf, we first assume that each agent $i$ has an associated \emph{fair share} $\a_i$, with $\sum_{i\in[n]} \a_i = 1$.
Fair shares are exogenously defined (i.e., before we run the mechanism) and represent the fraction of the resource that the principal wants each agent to have (which then determines her associated \emph{ideal utility}, see \cref{ssec:ideal}).

The \dmmf mechanism (\cref{algo:algo}) aims to keep the rounds allocated to every agent $i$ close to her fair share $\alpha_i$, compared to other agents.
Formally, if $A_i[t-1]$ is the allocation of agent $i$ up to round $t-1$ (i.e., number of rounds won), the agent with the smallest $\frac{A_i[t-1] + 1}{\a_i}$ who requests the item wins in round $t$ (we break ties arbitrarily).

Reemphasizing our comment in the introduction: \dmmf is a very simple deterministic policy that needs no knowledge of the agents' latent state, value distribution, or even the horizon $T$.

\begin{algorithm}[t]
% \SetAlgoNoLine
\DontPrintSemicolon
\caption{Dynamic Max-Min Fair Algorithm}
\label{algo:algo}
\KwIn{Agents $[n]$, agents' fair shares $\{\a_i\}_{i\in[n]}$} 
\textbf{Initialize} agent allocations $A_i[0] = 0 \quad \forall \, i\in [n]$\;
\For{$t = 1, 2, \ldots$}
{
    Let $\mathcal N_t =$ agents who request for the item in round $t$\;

    \lIf{$|\mathcal N_t| = 0$}{
        Set $i^* = \varnothing$%
        \tcp*[f]{No winner}%
    }\lElse{
        Allocate to an agent $i^* \in \argmin_{i \in \mathcal N_t}\frac{A_i[t-1] + 1}{\a_i}$%
        \tcp*[f]{Minimum normalized allocation}
    }
    Set $A_i[t] = A_i[t-1] + \One{i = i^*} \quad \forall \, i\in [n]$%
    \tcp*{Update agents' total allocations}
}
\end{algorithm}

\subsection{The Ideal Utility Benchmark}
\label{ssec:ideal}

When mechanisms use real money and agents have quasi-linear utilities, monetary values provide a way to compare agents' values and utilities.
In contrast, in non-monetary settings, there is no way to make interpersonal comparisons between agents, and so we need to use individual welfare benchmarks (i.e., from each agent's perspective).
The benchmark we use here is the \emph{ideal utility}, which starts by assuming that agent $i$ has an (exogenous) fair share $\alpha_i$ and then defines her benchmark utility as the top $\alpha_i$ fraction of items based on her own relative valuations in different rounds.
\cite{DBLP:conf/sigecom/GorokhBI21}~introduced the ideal utility based on earlier ideas from the economics literature on bargaining and the Fisher market.
\cite{DBLP:conf/sigecom/BanerjeeFT23}~subsequently extend this definition for multi-round demands, which we use in \cref{sec:mult}.

Following~\cite{DBLP:conf/sigecom/GorokhBI21,DBLP:conf/sigecom/BanerjeeFT23}, we define an agent $i$'s ideal utility to be \emph{the highest expected per-round utility she can get while only getting her fair share of the resource} (i.e., being allocated at most an $\alpha_i$ fraction of rounds). 
Formally, for each agent $i$ with fair-share $\alpha_i$, her ideal utility is the maximum expected per-round utility in a simplified setting with no agents other than $i$, but where agent $i$ has to request the item at most $\a_i$ fraction of the rounds.

Note, though, that the definition of an agent's ideal utility only takes as input her fair share $\alpha_i\in [0,1]$, and not those of other agents.
This motivates a generalization of the above definition to that of \textit{$\b$-ideal utility}: the maximum expected per-round utility an agent can receive with no competition if she is allowed to get the item a $\b \in [0,1]$ fraction of the rounds.
We denote the $\b$-ideal utility of agent $i$ with $\vstar_i(\b)$.
Now, extending the definition of~\cite{DBLP:conf/sigecom/BanerjeeFT23}, we have the following.

\begin{definition}[$\b$-ideal utility]\label{def:b_ideal}
    Fix $\b\in [0,1]$ and agent $i$ with value distribution $\mathcal{F}_i$ in her steady state.
    Her \textit{$\b$-ideal utility} $\vstar_i(\b)$ is the maximum expected utility she can get under no competition, but when the probability with which she requests the item is at most $\b$.
    Formally, if $\rho_\b:\R_+ \to [0, 1]$ denotes the (stationary) probability of requesting the item given its current value, then:
    \begin{equation}
    \label{eq:ind:def}
    % \begin{split}
        \vstar_i(\b) \;=
        \quad
        \max_{\rho_\b:\R_+ \to [0, 1]} \quad \Ex[V \sim \mathcal{F}_i]{V \cdot \rho_\b(V)}
        % \\
        \qquad \textrm{such that} \quad
        \Ex[V \sim \mathcal{F}_i]{\rho_\b(V)} \le \b
    % \end{split}
    \end{equation}
\end{definition}

The $\b$-ideal utility (and the corresponding optimal policy $\rho^{\star}_\b(\cdot)$) are important for our results, and so, before proceeding, we make three observations:
First, the ideal utility of agent $i$ with fair share $\a_i$ (as defined in~\cite{DBLP:conf/sigecom/GorokhBI21,DBLP:conf/sigecom/BanerjeeFT23}) is $\vstar_i(\alpha_i)$.
However, the more general definition lets us consider the value of a different request frequency, i.e., more/less aggressive than what her fair share is, even though the mechanism treats them as having share $\alpha_i$.

Second, note that optimization problem~\eqref{eq:ind:def} references $\mathcal{F}_i$, the agent's \emph{steady-state} value distribution; neither the policy $\rho_\b$ nor the expectations depend on her true latent state.
This is because, for sufficiently long horizon $T$, the value of $\vstar_i(\b)$ is roughly the same as the following:
the time-average utility of agent $i$ over $T$ rounds if she requests the item according to the optimal policy (that might be state, history, and time-dependent), which always gets the item at most $\b T$ rounds.
In particular, if we take $T \to \infty$, then the two are the same almost surely.

Third, note that $\vstar_i(\b)$ is a non-decreasing and concave function.
This is important to relate its values for different values of $\b$.
To see this, note that we can simplify~\eqref{eq:ind:def} to get that $\vstar_i(\b)$ is the expected value after replacing values in the bottom $(1-\b)$ part of the distribution with $0$.
For distributions with no atoms this is $\vstar(\b) = \Ex{V \One{V \ge F^{-1}(1-\b)}}$, where $F$ is the distribution's CDF.
This is clearly non-decreasing in $\b$ as higher $\b$ includes more non-negative values.
Further, the derivative in $\b$ is $F^{-1}(1-\b)$, the value at the top $\b$th fraction of the distribution, which is non-increasing in $\b$ and shows that the function is concave.

\subsection{Ideal Utility and Social Welfare}
\label{sec:ideal:welfare}

In the special case where all agents have the same value distribution and equal fair shares, making their values comparable, we can reason about their relative values even without knowing their distribution.
Consequently, maximizing social welfare becomes a reasonable goal. 
In this setting, where every agent $i$ has $\vstar_i(\cdot) = \vstar(\cdot)$, the quantity $n \vstar(1/n) T$ is an upper bound for the expected maximum social welfare.

This upper bound entails that every per-agent robustness guarantee for ideal utility also implies a social welfare guarantee.
When every agent receives utility at least $c \vstar(1/n) T$ for some $c \le 1$, this immediately implies that the resulting social welfare is at least a $c$ fraction of the expected maximum social welfare.
In other words, the robustness guarantees of the \dmmf mechanism immediately translate to Price of Anarchy bounds in symmetric settings\footnote{We note that the existence of Nash Equilibria for discrete values is guaranteed from standard results, e.g., \cite[Chapter 13]{DBLP:books/daglib/0070442}.}.
In particular, in \cref{cor:single:welfare}, we show that the PoA under \dmmf goes to $1$ as $n\rightarrow\infty$ under mild assumptions.

We note that the above technique can also be used to maximize social welfare for heterogeneous value distributions assuming values are expressed so that inter-agent comparisons are possible, the principal knows these distributions, and can use them to determine shares.
This follows from standard arguments involving ex-ante relaxations, wherein we set the fair shares according to the fractional allocation that maximizes the expected social welfare.
%Subsequently, our robust guarantees would imply welfare guarantees, as in the symmetric case above. A similar technique was used in \cite{gorokh2017}, where the agents' budgets are set according to the above idea to maximize social welfare in a pseudo-market mechanism.

\section{Deferred proofs from Section \ref{sec:single}}
\label{sec:app:markov}

\subsection{Proofs of Corollaries in Section \ref{ssec:single:indep}}
\label{ssec:app:single:indep}

We start by proving \textbf{\cref{cor:indep:density}}.
% \begin{proof}[Proof of \cref{cor:indep:density}]
Let $F:[\underline v, \overline v] \to [0,1]$ be the CDF and $F^{-1}:[0,1] \to [\underline v, \overline v]$ be its inverse. We are going to show that for any $\beta$
\begin{equation}\label{eq:app:v_der}
    \frac{d}{d\b}\vstar(\b)
    =
    F^{-1}(1-\b)
\end{equation}
which implies
\begin{equation*}
    \frac{d^2}{d^2\b}\vstar(\b)
    =
    - \frac{1}{f(F^{-1}(1-\b))}
    \in
    \left[ -\frac{1}{\l_1}, -\frac{1}{\l_2} \right]
\end{equation*}
which in turn implies that $-\vstar$ is $\frac{1}{\l_2}$-strictly convex and $\frac{1}{\l_1}$-smooth. To show \eqref{eq:app:v_der} we use the fact that $\vstar(\b)$ is the expectation of the random variable of the value $V$, condition on it being on the top $\b$-th percentile:
\begin{equation*}
    \vstar(\b)
    =
    \Ex{V \One{V \ge F^{-1}(1-\b)}}
    =
    \int_{F^{-1}(1-\b)}^{\overline v} x f(x) dx
\end{equation*}
which implies that
\begin{alignat*}{3}
    \Line{
        \frac{d}{d\b}\vstar(\b)
    }{=}{
        \frac{d}{d\b}\int_{F^{-1}(1-\b)}^{\overline v} x f(x) dx
    }{}\\
    \Line{}{=}{
        F^{-1'}(1-\b)
        F^{-1}(1-\b)
        f\left(F^{-1}(1-\b)\right)
    }{}\\
    \Line{}{=}{
        - F^{-1}(1-\b)
        \frac{d}{d\b} \left( F\big(F^{-1}(1-\b)\big) \right)
    }{}\\
    \Line{}{=}{
        F^{-1}(1-\b)
    }{}
\end{alignat*}

We now use the $\frac{1}{\l_1}$-smoothness of $-\vstar$ on $\b, \a$:
\begin{align}\label{eq:app:1}
    & - \vstar(\b)
    \le
    - \vstar(\a)
    - F^{-1}(1 - \a) (\b - \a)
    + \frac{1}{2\l_1} (\b - \a)^2
    \nonumber\\
    \iff &
    \vstar(\b) - \vstar(\a)
    \ge
    F^{-1}(1 - \a) (\b - \a)
    - \frac{1}{2\l_1} (\b - \a)^2
\end{align}
and the $\frac{1}{\l_1}$-smoothness on $\a, 0$:
\begin{align}\label{eq:app:2}
    & - \vstar(0)
    \le
    - \vstar(\a)
    - F^{-1}(1 - \a) (0 - \a)
    + \frac{1}{2\l_1} (0 - \a)^2
    \nonumber\\
    \iff &
    \vstar(\a)
    \le
    F^{-1}(1 - \a) \a
    + \frac{1}{2\l_1} \a^2
\end{align}

Fix any $\b \ge \a$ and take:
\begin{alignat*}{3}
    \Line{
        \frac{\vstar(\b) - \vstar(\a)}{\vstar(\a)}
    }{\ge}{
        \frac{
            F^{-1}(1 - \a) (\b - \a) - \frac{1}{2\l_1} (\b - \a)^2
        }{
           F^{-1}(1 - \a) \a + \frac{1}{2\l_1} \a^2 
        }
    }{\text{by \eqref{eq:app:1}, \eqref{eq:app:2}}}
    \\
    \Line{}{\ge}{
        \frac{
            \frac{1 - \a}{\l_2} (\b - \a) - \frac{1}{2\l_1} (\b - \a)^2
        }{
           \frac{1 - \a}{\l_2} \a + \frac{1}{2\l_1} \a^2
        }
    }{F^{-1}(1 - \a) \ge \frac{1 - \a}{\l_2} + \underline v \ge \frac{1 - \a}{\l_2}}
    \\
    \Line{}{=}{
        \frac{
            (\b - \a)\big( 2 (1 - \a)\l_1 - \l_2(\b - \a) \big)
        }{
           \a\big( 2(1 - \a)\l_1 + \a \l_2 \big)
        }
    }{}
\end{alignat*}

The above makes the guarantee of \cref{thm:single:guar}
\begin{alignat*}{3}
    \Line{
        \frac{\a}{\a + \b - \a\b}\left( 1 + \frac{\vstar(\b) - \vstar(\a)}{\vstar(\a)} \right)
    }{\ge}{
        \frac{\a}{\a + \b - \a\b}
        \left( 1 + \frac{
            (\b - \a)\big( 2 (1 - \a)\l_1 - \l_2(\b - \a) \big)
        }{
           \a\big( 2(1 - \a)\l_1 + \a \l_2 \big)
        }
        \right)
    }{}
    \\
    \Line{}{=}{
        \frac{
            \b
            \big( 2 \a (\l_2 - \l_1) + 2\l_1 - b \l_2 \big)
        }{
            \big( \b + \a (1 - \b) \big)
            \big( 2 (1 - \a) \l_1 + \a \l_2 \big)
        }
    }{}
\end{alignat*}

Because $\a \le \min\left\{\frac{2\l_1}{\l_2},\frac{\l_2}{2\l_1}\right\}$ we can set $\b = \sqrt{2\a \l_1 / \l_2} \in [\a, 1]$ to get the following bound
\begin{equation*}
    \frac{\sqrt{2} \sqrt{\a \l_1} \left(\sqrt{2} \sqrt{\a \l_1 \l_2}+2 (\alpha -1) \l_1-2 \a \l_2\right)}{\left(-\sqrt{2} \sqrt{\a \l_1}+\sqrt{2} \a \sqrt{\a \l_1}-\a \sqrt{\l_2}\right) \left(2 (\alpha -1) \l_1-\a \l_2\right)}
\end{equation*}

Set $\l = \l_2 / \l_1 \ge 1$, in which case the above becomes
\begin{alignat*}{3}
    \Line{
        \frac{
            \sqrt{2} \left(2  - \sqrt{2\a \l} + 2 \a (\l-1) \right)
        }{
            (2 + \a (\l-2)) \left(\sqrt{2} + \sqrt{\a \l} - \sqrt{2} \alpha \right)
        }
    }{\ge}{
        \frac{
            \sqrt{2} \left(2  - \sqrt{2\a \l}\right)
        }{
            (2 + \a \l) \left(\sqrt{2} + \sqrt{\a \l} \right)
        }
    }{\a \ge 0}
    \\
    \Line{}{=}{
        \frac{
            4 - 4 \sqrt{2\a\l} + 2\a\l
        }{
            4 - \a^2 \l^2
        }
    }{\substack{\text{multiply numerator and}\\\text{denominator with } 2 - \a\l}}
    \\
    \Line{}{=}{
        1 - \frac{
            4 \sqrt{2\a\l} - 2\a\l - \a^2 \l^2
        }{
            4 - \a^2 \l^2
        }
    }{}
    \\
    \Line{}{=}{
        1 - \sqrt{\a \l} \frac{
            4 \sqrt{2} - 2\sqrt{\a\l} - (\a \l)^{3/2}
        }{
            4 - \a^2 \l^2
        }
    }{}
    \\
    \Line{}{\ge}{
        1 - \sqrt{\a \l} \sqrt{2}
    }{\substack{\text{fraction decreasing}\\\text{in } \a\l \text{ if } 0\le \a\l \le 2}}
\end{alignat*}
which proves the result.
% \end{proof}

We now prove \textbf{\cref{cor:indep:bernoulli}}.
% \begin{proof}[Proof of \cref{cor:indep:bernoulli}]
It is not hard to see that for Bernoulli values with mean $p$ it holds $\vstar(\b) = \min\{p, \b\}$ for any $\b \in [0, 1]$. This makes the guarantee in \cref{thm:single:guar} for $\g = 1$ and $\b = p$
\begin{equation*}
    \frac{\sum_t \Ex{U[t]}}{\vstar(\a) T}
    +
    O\left(\frac 1 T\right)
    \ge
    \frac{\a}{\a + p - \a p} \frac{\vstar(p)}{\vstar(\a)}
    =
    \frac{\a}{\a + p - \a p} \frac{p}{\min\{\a, p\}}
    =
    \frac{\max\{\a, p\}}{\a + p - \a p}
\end{equation*}

For the second part of the guarantee in the corollary, letting $\l = \max\left\{ \frac{\a}{p}, \frac{p}{\a} \right\} = \frac{\max\{\a, p\}}{\min\{\a, p\}}$ we get
\begin{equation*}
    \frac{\max\{\a, p\}}{\a + p - \a p}
    =
    \frac{\frac{\max\{\a, p\}}{\min\{\a, p\}}}{\frac{\a + p - \a p}{\min\{\a, p\}}}
    =
    \frac{\l}{1 + \l - \max\{\a, p\}}
    \ge
    \frac{\l}{1 + \l - \a}
    =
    1 - \frac{1 - \a}{1 + \l - \a}
    \ge
    1 - \frac{1}{1 + \l}
\end{equation*}
where in the last inequality we use the fact that $\a \ge 0$.
% \end{proof}

\subsection{Proofs of Corollaries in Section \ref{ssec:single:correl}}
\label{ssec:app:correlatedguarantees}

We start by showing that
\begin{equation*}
    \frac{\g + \a(1 - \g)}{1 + (1 - \a)\g}
    \ge
    \frac{\g}{1 + \g}
    +
    \frac{\a \g}{(1 + \g)^2}
\end{equation*}
as stated in \textbf{\cref{cor:single:big_gamma}}. We notice that
\begin{alignat*}{3}
    \Line{
        \frac{\g + \a(1 - \g)}{1 + (1 - \a)\g}
        -
        \frac{\g}{1 + \g}
    }{=}{
        \frac{
            \g + \a(1 - \g)
            +
            \g^2 + \a(1 - \g)\g
            -
            \g
            -
            (1 - \a)\g^2
        }{(1 + \g)(1 + (1 - \a) \g)}
    }{}
    \\
    \Line{}{=}{
        \frac{\a}{(1 + \g)(1 + (1 - \a) \g)}
    }{}
    \\
    \Line{}{\ge}{
        \frac{\a \g}{(1 + \g)^2}
    }{\a \ge 0}
\end{alignat*}
which proves what we want.

Now we prove the equivalence and implication that was stated in \cref{eq:single:small_gamma_condition}:
\begin{equation*}
    \sqrt{1-\g} + 1 - \g
    >
    \frac{1}{1-\a}
    \iff
    \g
    <
    \frac{1}{2}\left(
        \sqrt{\frac{5-\a}{1-\a}}
        -
        \frac{1 + \a}{1 - \a}
    \right)
    \implies
    \g
    <
    \frac{\sqrt 5 - 1}{2}
\end{equation*}
For the first part let $\sqrt{1-\g} = x$ and $\frac{1}{1 - \a} = y$. The inequality in this case is equivalent with
\begin{equation*}
    x^2 + x > y
\end{equation*}
which is equivalent with
\begin{equation*}
    x + \frac{1}{2} > \sqrt{y + \frac{1}{4}}
    \text{ or }
    x + \frac{1}{2} < - \sqrt{y + \frac{1}{4}}
\end{equation*}

Since $x > 0$ we keep only the first case which, by substituting $x = \sqrt{1 - \g}$, becomes
\begin{equation*}
    \sqrt{1 - \g} > \sqrt{y + \frac{1}{4}} - \frac{1}{2}
\end{equation*}
or equivalently since $\sqrt{y + \frac{1}{4}} \ge \frac{1}{2}$
\begin{equation*}
    1 - \g
    >
    y + \frac{1}{4} 
    - \sqrt{y + \frac{1}{4}} + \frac{1}{4}
\end{equation*}
or
\begin{equation*}
    \g
    <
    \frac{1}{2}
    - y 
    + \sqrt{y + \frac{1}{4}}
    =
    \frac{1 - \a - 2}{2(1 - \a)}
    + \sqrt{\frac{4 + 1 - \a}{4(1 - \a)}}
    =
    -\frac{1 + \a}{2(1 - \a)}
    + \sqrt{\frac{5 - \a}{4(1 - \a)}}
\end{equation*}
which proves the equivalence.
To show that that above makes $\g$ at most $\frac{\sqrt 5 - 1}{2}$ we simply notice the function
\begin{equation*}
    - \frac{1 + \a}{2(1 - \a)}
    + \sqrt{\frac{5 - \a}{4(1 - \a)}}
\end{equation*}
is decreasing in $\a$ which means that is maximized when $\a = 0$ where its value is $\frac{\sqrt 5 - 1}{2}$. We note that the derivative of the function above is
\begin{equation*}
    \frac{
        \sqrt{\frac{1 - \a}{5 - \a}} - 1
    }{(1-\a)^2}
\end{equation*}
which is negative since $\frac{1 - \a}{5 - \a} < 1$.

We now prove \textbf{\cref{cor:single:small_gamma}}.
We use $\b = \frac{\a}{(1-\a)(\sqrt{1-\g} + 1 - \g)} \le \a$ (the inequality follows from \cref{eq:single:small_gamma_condition}) in \cref{thm:single:guar} and the fact that $\vstar(\b) \ge \frac{\b}{\a} \vstar(\a)$ which follows from the concavity of $\vstar$ and the fact that $\b \le \a$.
This gives the following factor of the ideal utility:
\begin{alignat*}{3}
    \Line{
        \g
        \frac{\a - (1-\a)\b(1-\g)}{\a + (1-\a)\b\g}
        \frac{\b}{\a}
    }{=}{
        \g
        \frac{\a - \frac{\a}{\sqrt{1-\g} + 1 - \g}(1-\g)}{\a + \frac{\a}{\sqrt{1-\g} + 1 - \g}\g}
        \frac{\frac{\a}{(1-\a)(\sqrt{1-\g} + 1 - \g)}}{\a}
    }{}
    \\
    \Line{}{=}{
        \g
        \frac{1 - \frac{1}{\sqrt{1-\g} + 1 - \g}(1-\g)}{1 + \frac{1}{\sqrt{1-\g} + 1 - \g}\g}
        \frac{1}{\sqrt{1-\g} + 1 - \g}
        \frac{1}{1-\a}
    }{}
    \\
    \Line{}{=}{
        \g
        \frac{
            1 - \frac{\sqrt{1-\g} - (1 - \g)}{\g}
        }{
            \sqrt{1-\g} + 1 - \g + \g
        }
        \frac{1}{1-\a}
        =\;
        \frac{
            \g - (\sqrt{1-\g} - (1 - \g))
        }{
            \sqrt{1-\g} + 1
        }
        \frac{1}{1-\a}
    }{}
    \\
    \Line{}{=}{
        \frac{
            1 - \sqrt{1-\g}
        }{
            \sqrt{1-\g} + 1
        }
        \frac{1}{1-\a}
    }{}
\end{alignat*}
which proves the first part of the guarantee. For the second part, show that
\begin{equation*}
    \frac{1 - \sqrt{1 - \g}}{1 + \sqrt{1 + \g}}
    >
    \frac{\g}{4}
\end{equation*}
as needed in \cref{cor:single:small_gamma}. We notice that
\begin{equation*}
    \frac{1 - \sqrt{1 - \g}}{1 + \sqrt{1 + \g}}
    =
    \frac{(1 - \sqrt{1 - \g})(1 + \sqrt{1 - \g})}{(1 + \sqrt{1 + \g})(1 + \sqrt{1 - \g})}
    =
    \frac{\g}{(1 + \sqrt{1 + \g})(1 + \sqrt{1 - \g})}
    \ge
    % \frac{\g}{(1 + \sqrt{1 + 0})(1 + \sqrt{1 - 0})}
    % =
    \frac{\g}{4}
\end{equation*}
where in the inequality we use the fact that $\g \ge 0$ on the denominator $(1 + \sqrt{1 + \g})(1 + \sqrt{1 - \g})$ since it is a decreasing function: its derivative is
\begin{equation*}
    \frac{\sqrt{1 - \g} - \sqrt{1 + \g} - 2\g}{2\sqrt{1 - \g^2}}
    \;\overset{\g \ge 0}{\le}\;
    \frac{\sqrt{1 - 0} - \sqrt{1 + 0} - 2\, 0}{2\sqrt{1 - \g^2}}
    \;=\;
    0
\end{equation*}

Finally, we prove the bound in \textbf{\cref{cor:single:general_gamma}}. Using $\b = \a/2$ in \cref{thm:single:guar} we get
\begin{alignat*}{3}
    \Line{
        \g
        \frac{\a - (1-\a)\b(1-\g)}{\a + (1-\a)\b\g}
        \frac{\b}{\a}
    }{=}{
        \g
        \frac{\a - (1-\a)\frac{\a}{2}(1-\g)}{\a + (1-\a)\frac{\a}{2}\g}
        \frac{1}{2}
        =\;
        \frac{\g}{2}
        \frac{2 - (1-\a)(1-\g)}{2 + (1-\a)\g}
    }{}
    \\
    \Line{}{\ge}{
        \frac{\g}{2}
        \frac{2 - (1-\g)}{2 + \g}
        =\;
        \g
        \frac{1 + \g}{4 + 2\g}
    }{\a \ge 0}
\end{alignat*}
which proves the corollary.

\subsection{Deferred proof of Lemma \ref{lem:single:dmmf}}
\label{ssec:app:lemma_proof}

Here we include the full proof of \cref{lem:single:dmmf} that is required for the proof of \cref{thm:single:guar}.

\begin{proof}[Proof of \cref{lem:single:dmmf}]
    Let $\Blk_j[t] = 1$ indicate that $\Blk[t] = 1$ because of agent $j \ne i$. Let $T_j$ be the last round before $T$ where $\Blk_j[t] = 1$. Then
    \begin{alignat*}{3}
        \Line{
            \sum_{t=1}^T \Blk[t]
        }{=}{
            \sum_{j\ne i} \sum_{t=1}^T \Blk_j[t]
            =
            \sum_{j\ne i} \sum_{t=1}^{T_j} \Blk_j[t]
        }{}
        \\
        \Line{}{\le}{
            \sum_{j\ne i} A_j[T_j]
        }{\blk_j[t] = 1 \implies j \text{ wins } t}
        \\
        \Line{}{\le}{
            \sum_{j\ne i} \frac{\a_j}{\a_i} (A_i[T_j] + 1)
        }{\blk_j[T_j] = 1 \implies j \text{ wins over $i$}}
        \\
        \Line{}{\le}{
            (A_i[T] + 1) \sum_{j\ne i} \frac{\a_j}{\a_i}
        }{A_i[T_j] \le A_i[T]}
        \\
        \Line{}{=}{
            (A_i[T] + 1)\frac{1-\a_i}{\a_i}
        }{\sum\nolimits_{j \in [n]} \a_j = 1}
    \end{alignat*}

    The proof is completed by noticing that $\sum_{t\in [T]} W[t] = A_i[T]$.
\end{proof}
\section{Deferred Text of Section \ref{sec:mult}}
\label{sec:app:mult}

\subsection{Explanation of the definition of ideal utility for multi-round demands}
\label{ssec:app:mult:explain_ideal}

This section explains the definition of the $\b$-ideal utility for multi-round demands. First, we re-state the formal definition, \eqref{eq:mult:ideal}:
\begin{equation}
\label{eq:mult:ideal2}
\begin{split}
    \vstar_i(\b) \;=
    \quad
    \max_{\rho_\b:\R_+\times \N \to [0, 1]} \qquad &
    \frac{\Ex{ V K \rho_\b(V, K) }}{1 - \Ex{ \rho_\b(V, K) } + \Ex{ K \rho_\b(V, K) }}
    \\
    \textrm{such that} \qquad
    % & \Pr{ \req(V,K)=1 } = q
    % \\
    & \frac{\Ex{ K \rho_\b(V, K) }}{1 - \Ex{ \rho_\b(V, K) } + \Ex{ K \rho_\b(V, K) }} \le \b
\end{split}
\end{equation}

To understand the above optimization problem, it is easiest to first define $\req(V, K)$ as a Bernoulli random variable with mean $\rho_\b(V, K)$.
If we let $q = \Ex{\rho_\b(V, K)} = \Ex{\req(V, K)}$ then each fraction in \eqref{eq:mult:ideal2} becomes
\begin{equation}\label{eq:mult:51}
    \frac{\Ex{ V K \rho_\b(V, K) }}{1 - \Ex{ \rho_\b(V, K) } + \Ex{ K \rho_\b(V, K) }}
    =
    \frac{\Ex{ V K \;|\; \req(V, K) = 1 }}{\frac{1}{q} - 1 + \Ex{ K \rho_\b(V, K) \;|\; \req(V, K) = 1 }}
\end{equation}
and similarly
\begin{equation}\label{eq:mult:52}
    \frac{\Ex{ K \rho_\b(V, K) }}{1 - \Ex{ \rho_\b(V, K) } + \Ex{ K \rho_\b(V, K) }}
    =
    \frac{\Ex{ K \;|\; \req(V, K) = 1 }}{\frac{1}{q} - 1 + \Ex{ K \rho_\b(V, K) \;|\; \req(V, K) = 1 }}
\end{equation}

The denominator in the r.h.s. of \eqref{eq:mult:51} and \eqref{eq:mult:52} can be broken into two parts: $\frac{1}{q} - 1$ is the expected number of rounds before the agent asks for the item.
$\Ex{ K \rho_\b(V, K) | \req(V, K) = 1 }$ is the expected number of rounds the agent holds the item for a single request. 
These two quantities together define the expected number of rounds between the agent's requests.
Given that the numerators of \eqref{eq:mult:51} and \eqref{eq:mult:52} are the expected utility and duration of an agent's demand, respectively, each fraction defines the expected average of each quantity: \eqref{eq:mult:51} is the expected per-round utility and \eqref{eq:mult:52} is the probability that the agent holds the item in any single round.

We now briefly prove that the $\b$-ideal utility $\vstar(\b)$ is concave even in the case of reusable resources.

\begin{lemma}\label{lem:app:mult:concave}
    For every valuation $\calF$ on $(V, K)$ the $\b$-ideal utility $\vstar(\b)$ is concave.
\end{lemma}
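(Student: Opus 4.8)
The plan is to recast the fractional program~\eqref{eq:mult:ideal} as the maximization of a \emph{linear} functional over a convex feasible set that depends on $\b$ only through a single linear inequality; concavity of $\vstar$ then falls out of averaging two optimal solutions, exactly as the optimal value of a linear program is concave in the right-hand sides of its constraints. This is precisely the linear-programming reformulation of \cite{DBLP:conf/sigecom/BanerjeeFT23}, which I would cite; I outline it here for completeness.

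First I would apply a Charnes--Cooper-type change of variables. For a requesting policy $\rho:\R_+\times\N\to[0,1]$, write $D(\rho) = 1 - \Ex{\rho(V,K)} + \Ex{K\rho(V,K)} = 1 + \Ex{(K-1)\rho(V,K)}$; since $K\ge 1$, always $D(\rho)\ge 1>0$, so both ratios in~\eqref{eq:mult:ideal} are well defined. Setting $\mu(V,K) := \rho(V,K)/D(\rho)$, a short manipulation gives $1/D(\rho) = 1 - \Ex{(K-1)\mu(V,K)}$, so that $\rho\le 1$ becomes $\mu(V,K) + \Ex{(K-1)\mu(V,K)}\le 1$ for all $(V,K)$, the objective becomes the linear functional $\Ex{VK\mu(V,K)}$, and the side constraint becomes $\Ex{K\mu(V,K)}\le\b$. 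Conversely, any measurable $\mu\ge 0$ satisfying $\mu(V,K)+\Ex{(K-1)\mu(V,K)}\le 1$ automatically has $\Ex{(K-1)\mu(V,K)}<1$ (multiply the constraint by $K-1\ge 0$ and take expectations), hence comes from an admissible $\rho = \mu/(1-\Ex{(K-1)\mu})$ with the same objective value. Therefore
\[
  \vstar(\b) \;=\; \max\bigl\{\, \Ex{VK\mu(V,K)} \;:\; \mu\in P_\b \,\bigr\},
\]
where $P_\b$ is the set of measurable $\mu\ge 0$ with $\mu(V,K)+\Ex{(K-1)\mu(V,K)}\le 1$ for every $(V,K)$ and $\Ex{K\mu(V,K)}\le\b$ --- all linear in $\mu$, and with $\b$ entering only through the last inequality.

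Then the concavity is a one-line mixing argument. Given $\b_1,\b_2$ and $\lambda\in[0,1]$, set $\b=\lambda\b_1+(1-\lambda)\b_2$ and let $\mu_j\in P_{\b_j}$ attain $\vstar(\b_j)$. The convex combination $\mu=\lambda\mu_1+(1-\lambda)\mu_2$ is nonnegative, satisfies $\mu(V,K)+\Ex{(K-1)\mu(V,K)}\le \lambda+(1-\lambda)=1$ for every $(V,K)$, and satisfies $\Ex{K\mu(V,K)}\le\lambda\b_1+(1-\lambda)\b_2=\b$, so $\mu\in P_\b$; since the objective is linear in $\mu$,
\[
  \vstar(\b)\ \ge\ \Ex{VK\mu(V,K)}\ =\ \lambda\,\vstar(\b_1)+(1-\lambda)\,\vstar(\b_2).
\]
(If one prefers not to invoke attainment of the maximum, run the same computation with $\e$-optimal $\mu_j$ and send $\e\to 0$.)

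The only genuinely delicate step is the bookkeeping of the substitution: verifying that $D(\rho)>0$ always, that $\mu$-feasibility forces $\Ex{(K-1)\mu}<1$ so one can invert back to an admissible $\rho$, and that the correspondence preserves the objective. Everything after that --- convexity of $P_\b$ in $(\mu,\b)$ and the averaging --- is routine, so I would keep the emphasis on making the reformulation airtight, or simply import it verbatim from \cite{DBLP:conf/sigecom/BanerjeeFT23}.
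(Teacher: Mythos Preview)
Your argument is correct and in fact begins the same way the paper does: both you and the paper linearize \eqref{eq:mult:ideal} via the Charnes--Cooper substitution (what you call $\mu$ is, up to the probability weights, what the paper calls $f_j$). The divergence is only in the last step. The paper restricts to finitely supported $\calF$, writes the linearized problem as a finite LP, takes the \emph{dual}, and observes that $\vstar(\b)=\min_{(g_0,\vec g)\in G}\bigl(\b g_0+\vec b^\top\vec g\bigr)$ is a pointwise minimum of affine functions of $\b$, hence concave; the general case is then obtained by a limiting argument over finite covers. You instead stay on the primal side and mix two (near-)optimal $\mu$'s directly, using that the feasible set $P_\b$ is convex in $(\mu,\b)$ and the objective linear. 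Your route is a bit more elementary---no duality, no restriction to finite support, no limiting step to justify---at the price of having to verify the $\rho\leftrightarrow\mu$ bijection carefully, which you do. The paper's route is slicker once the LP is written down but carries the finite-approximation overhead. Either way, the content is the same LP reformulation; the two arguments are interchangeable.
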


\begin{proof}
    We prove the lemma when in the distribution $\calF$ the values $(V, K)$ can take finitely many values.
    The lemma for general distributions follows by taking an arbitrarily small finite cover.

    Assume that there are $m$ different values of $(V, K)$, denoted $(v_j, k_j)$ for $j \in [m]$.
    $(v_j, k_j)$ appears with probability $p_i$.
    In this case we can transform \cref{eq:mult:ideal2} into an LP as in \cite{DBLP:conf/sigecom/BanerjeeFT23}:
    \begin{equation*}
    % \label{eq:ideal:LP}
    \begin{split}
        \vstar(\b)
        =
        \max_{\{f_j\}_{j\in [m]}} \qquad
        & \sum_{j \in [m]} v_j k_j f_j
        \\
        \textrm{such that} \qquad
        & \sum_{j \in [m]} k_j f_j \le \b
        \\
        & 0 \le f_j \le p_j \left( 1 - \sum_{j' \in [m]} (k_{j'} - 1) f_{j'} \right)\qquad\forall\,j \in [m]
    \end{split}
    \end{equation*}
    where $f_j$ denotes the frequency with which $(v_j, k_j)$ is requested.
    We can re-write the above using the vector $\vec f$ and some constant (that only depend on $k_j, v_j, p_j$) matrices $A\in\R^{m\times m}$, $\vec a \in \R^m$, $\vec c \in \R^m$:
    \begin{equation*}
    % \label{eq:ideal:LP}
    \begin{split}
        \vstar(\b)
        =
        \max_{\vec f \in \R^m} \qquad
        & \vec c ^\top \vec f
        \\
        \textrm{such that} \qquad
        & \vec a^\top \vec f \le \b
        \\
        & A f \le \vec b
    \end{split}
    \end{equation*}
    where for simplicity we hide the $\vec f \ge 0$ constraint. Taking the dual, we get
    \begin{equation*}
    % \label{eq:ideal:LP}
    \begin{split}
        \vstar(\b)
        =
        \min_{g_0\in\R, \vec g\in\R^m} \qquad
        & \b g_0 + \vec b^\top \vec g
        \\
        \textrm{such that} \qquad
        & A^\top \vec g + g_0 \vec a \ge \vec c
        % \\
        % & A f \le \vec b
    \end{split}
    \end{equation*}
    equivalently, by letting $G \subset \R^{m+1}$ be the set of extreme points of the polytope of the constraint $A^\top \vec g + g_0 \vec a \ge \vec c$, we get
    \begin{equation*}
        \vstar(\b)
        =
        \min_{(g_0,\vec g) \in G}\left( 
            \b g_0 + \vec b^\top \vec g
        \right)
    \end{equation*}

    The above proves that $\vstar(\b)$ is concave since we can write it as the minimum of concave functions.
\end{proof}

\subsection{Deferred proof of Theorem \ref{thm:mult:guar}}
\label{ssec:app:mult:proof}

This section presents the proof of \cref{thm:mult:guar}.

In addition to the rounds where the agent gets blocked by the adversary because the fairness criterion does not favor her, we also need to account for the rounds where the agent gets blocked because the adversary holds the item from some previous round.
To this end, we define $\Blk[t] = 1$ to indicate that the adversary has the item in a round $t$ and that the agent could not have won the item if she had requested it in that round.
This includes rounds when the agent $i$ is not favored by the fairness criterion and rounds when the adversary had won a multi-round demand on a previous round.

We first present the generalization of \cref{lem:single:dmmf} for reusable resources.

Using the above definition for $\blk$ we show a weaker version of \cref{lem:single:dmmf}. We bound the number of rounds the agent gets blocked by the maximum of the number of rounds the agent has won (as in \cref{lem:single:dmmf}) and the upper bound for multiple round demands of the adversary, $(1-\a)T/r$.

\begin{lemma}\label{lem:mult:dmmf}
    Fix the parameter $r \ge 1$. For any strategy followed by the agent, with probability $1$,
    \begin{equation*}
        \sum_{t=1}^T \Blk[t]
        \le
        \max\left\{
            \frac{1-\a}{\a} \left( \kmax + \sum_{t=1}^T W[t] K[t] \right)
            ,
            (1 - \a) T / r
        \right\}
    \end{equation*}
\end{lemma}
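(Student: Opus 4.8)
\textbf{Proof proposal for \cref{lem:mult:dmmf}.}

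The plan is to mimic the proof of \cref{lem:single:dmmf}, but now carefully distinguishing the \emph{two} distinct reasons that a round $t$ can have $\Blk[t] = 1$. Write $\Blk[t] = \Blk^{\mathrm{fair}}[t] + \Blk^{\mathrm{hold}}[t]$, where $\Blk^{\mathrm{fair}}[t] = 1$ if in round $t$ the adversary wins a \emph{fresh} allocation (i.e.\ round $t$ is the first round of an adversary's demand) and would have been favored by the fairness criterion over the agent had she requested, while $\Blk^{\mathrm{hold}}[t] = 1$ if the adversary is holding the item in round $t$ from a multi-round demand initiated at some earlier round. Every round counted by $\Blk$ falls into exactly one of these buckets (with ties/edge cases assigned to $\Blk^{\mathrm{hold}}$ when the demand started earlier). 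I would bound the two sums separately and then combine.

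For the $\Blk^{\mathrm{hold}}$ rounds: whenever an adversary $j$ wins a multi-round demand of length $d$, it occupies $d$ rounds, of which at most $\kmax$ are counted by $\Blk$ (in fact all $d \le \kmax$ of them can be, but the first one is already counted by $\Blk^{\mathrm{fair}}$ if fairness-favored). The key observation is the $r$-threshold in \cref{algo:mult}: agent $j$ is only allowed a multi-round demand if $A_j[t-1] + d_j[t] \le T\a_j/r$, so the \emph{total} number of rounds across all of $j$'s multi-round wins is at most $T\a_j/r + \kmax$ (the slack $\kmax$ because the last demand might push slightly over). Summing over $j \ne i$ gives $\sum_t \Blk^{\mathrm{hold}}[t] \le (1-\a)T/r + (n-1)\kmax$; more carefully, since we only need one of the two terms in the $\max$, I would fold this into the $(1-\a)T/r$ bound by absorbing the additive $\kmax$ terms appropriately, or — cleaner — argue that if many rounds are of $\Blk^{\mathrm{hold}}$ type then the $(1-\a)T/r$ branch of the max already dominates.

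For the $\Blk^{\mathrm{fair}}$ rounds I would replay the telescoping argument from \cref{ssec:app:lemma_proof}: let $\Blk^{\mathrm{fair}}_j[t] = 1$ when the block is due to agent $j$, let $T_j$ be the last such round, bound $\sum_t \Blk^{\mathrm{fair}}_j[t] \le A_j[T_j]$ (each fair-favored fresh adversary win of length $d$ contributes $d$ to $A_j$ but we count it once — so actually $\sum_t \Blk^{\mathrm{fair}}_j[t] \le A_j[T_j]$ still holds since each counted round corresponds to a distinct allocated round of $j$), then use that $\Blk^{\mathrm{fair}}_j[T_j] = 1$ means $j$ was favored over $i$, i.e.\ $\frac{A_j[T_j - 1] + d_j[T_j]}{\a_j} \le \frac{A_i[T_j - 1] + 1}{\a_i}$, hence $A_j[T_j] = A_j[T_j-1] + d_j[T_j] \le \frac{\a_j}{\a_i}(A_i[T_j-1] + 1) \le \frac{\a_j}{\a_i}(A_i[T] + 1)$. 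Summing over $j\ne i$ and using $\sum_{j\ne i}\a_j = 1-\a$ gives $\sum_t \Blk^{\mathrm{fair}}[t] \le \frac{1-\a}{\a}(A_i[T] + 1) \le \frac{1-\a}{\a}(\kmax + \sum_t W[t]K[t])$, where the final step uses $A_i[T] \le \sum_t W[t]K[t] + \kmax$ (the agent's total allocation is her completed multi-round wins plus at most one in-progress demand of length $\le \kmax$) and absorbs the $+1$.

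The main obstacle I anticipate is the bookkeeping of which rounds get double-counted between $\Blk^{\mathrm{fair}}$ and $\Blk^{\mathrm{hold}}$, and making the two separate bounds combine into a single $\max$ rather than a sum — the cleanest route is probably a case split: either at most half the blocked rounds are of $\Blk^{\mathrm{hold}}$ type (so $\sum\Blk[t] \le 2\sum\Blk^{\mathrm{fair}}[t]$, handled by the fairness bound up to constants) or not (handled by the $r$-threshold bound), but since the lemma wants the clean $\max$ form, I would instead directly argue that the \emph{union} of rounds blocked for either reason is dominated by whichever mechanism is binding. I would also need to double-check the additive $\kmax$ versus $(n-1)\kmax$ discrepancy; the statement has just $\kmax$, so presumably the intended argument routes the per-agent slack into the already-present $O(\kmax\sqrt T)$ error in \cref{thm:mult:guar} rather than into this lemma, or exploits that only the last-terminating adversary demand overshoots its threshold.
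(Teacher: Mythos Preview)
Your approach has a structural gap that you yourself identify but do not resolve: decomposing $\Blk[t]$ by \emph{block type} into $\Blk^{\mathrm{fair}}[t]+\Blk^{\mathrm{hold}}[t]$ and bounding the two sums separately produces a \emph{sum} of the two bounds, not a $\max$. Your proposed fixes (a half/half case split, or the unspecified ``union is dominated by whichever mechanism is binding'') either lose a constant factor or remain a plan rather than an argument. The $(n-1)\kmax$ versus $\kmax$ mismatch you worry about is a symptom of this wrong decomposition, not something to be swept into the $O(\kmax\sqrt T)$ slack of \cref{thm:mult:guar}.

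The paper avoids all of this by decomposing by \emph{agent} rather than by block type. For each $j\ne i$ it shows
\[
\sum_{t\le T}\Blk_j[t]\;\le\;\frac{\a_j}{\a_i}\,\max\bigl\{\kmax+A_i[T],\ \a_i T/r\bigr\},
\]
and since the $\max$ does not depend on $j$, summing over $j\ne i$ gives exactly the stated bound with a single $\kmax$. The per-agent case split is: if $\sum_t\Blk_j[t]\le \a_j T/r$ we are done; otherwise $A_j[T_j]\ge\sum_t\Blk_j[t]>\a_j T/r$ at the last blocking round $T_j$. The observation you are missing is that the $r$-threshold in \cref{algo:mult} then forces the winning demand of $j$ at $T_j$ to have duration~$1$ (any multi-round demand would have been excluded from $\mathcal N'_{T_j}$ since $A_j[T_j-1]+d_j[T_j]>\a_j T/r$). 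Hence the block at $T_j$ is necessarily a fairness block, and the fairness inequality $\frac{A_j[T_j]}{\a_j}\le\frac{A_i[T_j-1]+d_i[T_j]}{\a_i}\le\frac{A_i[T]+\kmax}{\a_i}$ delivers the other branch of the $\max$. No round is double-counted and no per-adversary $\kmax$ accumulates.
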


The proof is similar to the one in \cref{lem:single:dmmf}.
If $\sum_{t=1}^T \Blk[t] > (1 - \a) T / r$ (otherwise the statement holds), we can reason that the last time the agent gets blocked is due to a single round demand.
This allows us to use the fairness criterion to prove that the number of times the adversary has won the item cannot be much more than the agent's.
% The full proof of this lemma (and the rest of the proofs in this section) are deferred to \cref{sec:app:mult}.

\begin{proof}[Proof of \cref{lem:mult:dmmf}]
    Fix an agent $j\ne i$. Let $\Blk_j[t] = 1$ indicate that $\Blk[t] = 1$ because of agent $j$. We are going to prove that
    \begin{equation}\label{eq:mult:1}
        \sum_{t=1}^T \Blk_j[t]
        \le
        \frac{\a_j}{\a_i} \max\left\{
            \kmax + A_i[T]
            ,
            \a_i T/r
        \right\}
    \end{equation}
    and the theorem follows by summing over $j\ne i$.

    If $\sum_{t=1}^T \Blk_j[t] \le \a_j T / r$ then \eqref{eq:mult:1} is obviously true.
    Assume now that $\sum_{t=1}^T \Blk_j[t] > \a_j T/r$ and let $T_j$ be the last round where $\Blk_j[t] = 1$.
    This implies that $A_j[T_j] \ge \sum_{t=1}^T \Blk_j[t] > \a_j T/r$
    Because of the restriction on the set $\mathcal N'$, this implies that agent $j$ won the good in round $T_j$ by making a single round duration request.
    This implies that in $T_j$ agent $i$ got blocked by $j$ because the fairness criterion favored $j$ over $i$:
    \begin{equation*}
        \frac{A_j[T_j]}{a_j}
        \le
        \frac{A_i[T_j] + d_i[T_j]}{a_i}
    \end{equation*}

    This \eqref{eq:mult:1} by noticing that $A_j[T_j] \ge \sum_{t=1}^T \Blk_j[t]$ and $A_i[T_j] + d_i[T_j] \le A_i[T] + \kmax$.
\end{proof}

We now prove the lemma that relates to the agent's expected utility and the expected duration of her request. 

\begin{lemma}\label{lem:mult:bpb}
    Fix $\b \in (0, 1]$.
    Assume that the agent requests every round with probability $\rho_\b^*(V[t], K[t])$, i.e., $\req[t]$ is a Bernoulli random variable with mean $\rho_\b^*(V[t], K[t])$.
    Then, for every round $t$, it holds that
    \begin{equation*}
        \Ex{U[t]}
        =
        \frac{\vstar(\b)}{\b'} \Ex{K[t] W[t]}
    \end{equation*}
    where $\b' = \frac{\Ex{ K[t] \req[t] }}{1 - q + \Ex{ K[t] \req[t] }}$ is the expected frequency of rounds the agent holds the policy under $\rho_\b^*$.
\end{lemma}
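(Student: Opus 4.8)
The plan is to fix a round $t$ and evaluate $\Ex{U[t]}$ and $\Ex{K[t]\,W[t]}$ by conditioning on the full history $\mathcal H_{t-1}$ of all earlier rounds (the types, requests, and allocations of every agent before round $t$), together with the duration $K[t]$ the agent draws in round $t$, and to show that conditionally these two quantities are proportional with ratio $\vstar(\b)/\b'$. Recall that $U[t] = V[t]\,K[t]\,W[t]$ and $W[t] = \req[t]\,(1-\Blk[t])$, where under the $\b$-aggressive policy $\req[t]$ is Bernoulli with mean $\rho_\b^*(V[t],K[t])$, drawn independently of everything else given the type. Note first that, by \eqref{eq:mult:ideal}, the target ratio is
$\vstar(\b)/\b' = \Ex{VK\rho_\b^*(V,K)}\big/\Ex{K\rho_\b^*(V,K)}$, both expectations being over $(V,K)\sim\calF$, so it suffices to produce this ratio of second moments of the policy.

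First I would isolate the independences coming from the i.i.d.-across-rounds assumption. On the event that the agent is free to place a new demand in round $t$, the fresh type $(V[t],K[t])$ is independent of $\mathcal H_{t-1}$, and $\req[t]$ depends on the past only through $(V[t],K[t])$. The counterfactual blocking variable $\Blk[t]$ uses neither the agent's private value nor her own round-$t$ request, so conditionally on $\mathcal H_{t-1}$ and on $K[t]$ it is independent of the pair $(V[t],\req[t])$. Writing $b = b(\mathcal H_{t-1},K[t])$ for the conditional probability that the agent is not blocked in round $t$, this yields
\begin{align*}
    \Ex{U[t]\mid\mathcal H_{t-1},K[t]} &= K[t]\,b\;\ExC{V[t]\,\req[t]}{K[t]}, \\
    \Ex{K[t]\,W[t]\mid\mathcal H_{t-1},K[t]} &= K[t]\,b\;\ExC{\req[t]}{K[t]},
\end{align*}
so that, conditionally on $K[t]$, the ratio of the two expectations is $\ExC{V[t]\req[t]}{K[t]}\big/\ExC{\req[t]}{K[t]}$, independently of the adversary.

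The remaining step — which I expect to be the main obstacle — is to turn this conditional ratio into the unconditional constant $\vstar(\b)/\b'$, despite the fact that both the fairness comparison in \cref{algo:mult} and the $r$-cap let the adversary's blocking depend on the drawn duration $K[t]$. The structural fact to invoke is that in the linear program defining $\vstar(\b)$ (see \eqref{eq:mult:ideal}, the explanation in \cref{ssec:app:mult:explain_ideal}, and the LP reformulation used in the proof of \cref{lem:app:mult:concave}) the ``value per occupied round'' of a type $(v,k)$ is $vk/k = v$, so the optimal policy $\rho_\b^*$ is a threshold on the per-round value $v$ alone and is independent of $k$; together with the fact that durations carry no information the adversary can use to skew the value distribution, the duration-dependent factor $b$ then cancels identically from both $\Ex{U[t]}$ and $\Ex{K[t]W[t]}$ once we average over $\mathcal H_{t-1}$ and $K[t]$, leaving
\[
    \Ex{U[t]} \;=\; \frac{\Ex{VK\rho_\b^*(V,K)}}{\Ex{K\rho_\b^*(V,K)}}\;\Ex{K[t]\,W[t]} \;=\; \frac{\vstar(\b)}{\b'}\;\Ex{K[t]\,W[t]}.
\]
The fiddly points to handle carefully are the bookkeeping of ``free'' versus ``occupied'' rounds and the single place where blocking depends on the duration beyond the flag $\mathbb 1[K[t]>1]$ (namely the fairness comparison under the $r$-cap); apart from that, this is the same i.i.d.\ decoupling already carried out for the $\g=1$ case of \cref{thm:single:guar}.
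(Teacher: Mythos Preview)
There is a genuine gap at the step you flag as the main obstacle. After conditioning on $K[t]$, you need the ratio $\ExC{V[t]\req[t]}{K[t]=k}\big/\ExC{\req[t]}{K[t]=k}$ to equal the constant $\vstar(\b)/\b'$ for \emph{every} $k$, since the blocking probability $b$ can depend on $k$ arbitrarily and must cancel from both sides. Your argument for this is that $\rho_\b^*$ is a threshold on $v$ alone; but even granting that structural claim, it does not suffice. The model draws $(V,K)$ jointly from $\calF$, so $V$ and $K$ may be correlated, and then the conditional law of $V[t]$ given $K[t]=k$ varies with $k$. For a concrete failure, take $\Pr{(V,K)=(1,1)}=\Pr{(V,K)=(2,2)}=\tfrac12$ with $\rho_\b^*\equiv 1$: the ratio is $1$ at $k=1$ and $2$ at $k=2$. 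The assertion that ``durations carry no information the adversary can use to skew the value distribution'' is therefore false in general, and no amount of bookkeeping on free versus occupied rounds or on the $r$-cap will rescue the cancellation.

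The paper avoids this entirely by never isolating $K[t]$. It conditions directly on the event $\{W[t]=1\}$ and argues that, since the agent's fresh type affects whether she wins only through whether she requests, the law of $(V[t],K[t])$ given $W[t]=1$ coincides with its law given $\req[t]=1$. This yields $\Ex{U[t]}=\ExC{V[t]K[t]}{\req[t]=1}\,\Pr{W[t]=1}$ and $\Ex{K[t]W[t]}=\ExC{K[t]}{\req[t]=1}\,\Pr{W[t]=1}$; the common scalar $\Pr{W[t]=1}$ cancels, and the remaining ratio $\ExC{VK}{\req=1}\big/\ExC{K}{\req=1}=\Ex{VK\rho_\b^*(V,K)}\big/\Ex{K\rho_\b^*(V,K)}$ is exactly $\vstar(\b)/\b'$ straight from \eqref{eq:mult:ideal} and the definition of $\b'$. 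Keeping $(V,K)$ paired and absorbing all adversarial influence into the single factor $\Pr{W[t]=1}$ is what makes the argument go through without any structural assumption on $\rho_\b^*$ or on the joint law of $(V,K)$.
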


\begin{proof}
    We have that
    \begin{alignat*}{3}
        \Line{
            \Ex{U[t]}
        }{=}{
            \Ex{V[t] K[t] W[t]}
        }{}
        \\
        \Line{}{=}{
            \Ex{V[t] K[t] \big| W[t] = 1}
            \Ex{W[t]}
        }{}
        \\
        \Line{}{=}{
            \Ex{V[t] K[t] \big| \req[t] = 1}
            \Ex{W[t]}
        }{}
    \end{alignat*}
    where the last inequality holds because $V[t]$ and $K[t]$ do not directly depend on the probability that the agent wins but rather on her requesting the item.
    Similarly, we show that
    \begin{alignat*}{3}
        \Line{
            \Ex{K[t] W[t]}
        }{=}{
            \Ex{K[t] \big| W[t] = 1}
            \Ex{W[t]}
        }{}
        \\
        \Line{}{=}{
            \Ex{K[t] \big| \req[t] = 1}
            \Ex{W[t]}
        }{}
    \end{alignat*}

    Combining the two above equations and the definition of ideal utility (see the \eqref{eq:mult:ideal2}, \eqref{eq:mult:51}, and \eqref{eq:mult:52}), we get the lemma.
\end{proof}

We now prove the main result of \cref{sec:mult}, \cref{thm:mult:guar}.

\begin{proof}[Proof of \cref{thm:mult:guar}]
    Let $W'[t] = 1$ indicate that the agent wins round $t$ by requesting the item if she is never affected by the limit on multi-round demands of \cref{algo:mult} and if she is able to make multi-round demands even if there are not enough rounds left.
    Because for $t < T - \kmax$ it holds $W'[t] = W[t]$ until the round when the agent's allocation reaches $a T / r - \kmax$, it holds that
    \begin{equation}\label{eq:mult:61}
        \sum_{t = 1}^T K[t]W[t]
        \ge
        \min\left\{\a T/r,\sum_{t = 1}^T K[t]W'[t]\right\} - 2\kmax    
        .
    \end{equation}
    
    Let $P[t] = 1$ indicate that the agent is blocked by her own demand in some previous round: $P[t] = 1$ if for some round $t' < t$ the agent ``won'' round $t'$ ($W'[t'] = 1$) and her duration covered round $t$ ($K[t'] > t - t'$).
    It holds that
    \begin{equation*}
        W'[t]
        =
        \req[t](1 - \blk[t])(1 - P[t])
        =
        \req[t](1 - \blk[t] - P[t])
    \end{equation*}
    where in the last inequality we used the fact that it cannot hold that $\blk[t] = P[t] = 1$. Since $\req[t]$ and $K[t]$ are independent from $\blk[t]$ and $P[t]$ it holds that
    \begin{equation*}
        \Ex{K[t] W'[t]}
        =
        \Ex{K[t]\req[t]}(1 - \Ex{\blk[t]} - \Ex{P[t]})
        =
        \frac{(1-q)\b}{1-\b}
        (1 - \Ex{\blk[t]} - \Ex{P[t]})
    \end{equation*}
    where $q = \Ex{\req[t]}$ and $\b' = \frac{\Ex{ K[t] \req[t] }}{1 - q + \Ex{ K[t] \req[t] }}$. Note that $\b' \le \b$. Now we show a high probability bound: for every $\d > 0$,
    \begin{equation*}
        \Pr{
            \sum_t K[t]W'[t]
            \le
            \frac{(1-q)\b'}{1-\b'} \sum_t(1 - \blk[t] - P[t])
            - \kmax\sqrt{2T \log(1/\d)}
        }
        \le
        \d
    \end{equation*}
    We show this using the standard Azuma-Hoeffding inequality.
    Now we use in the above equation the fact that $\sum_t P[t] = \sum_t W'[t](K[t] - 1)$ and re-arrange to get 
    \begin{equation}\label{eq:mult:2}
        \Pr{
            \sum_t W'[t]\left( K[t] + \frac{(1 - q)\b'}{1 - \b'}(K[t] - 1) \right)
            \le
            \frac{(1 - q)\b'}{1 - \b'}
            \left(T - \sum_t\blk[t]\right)
            - \kmax\sqrt{2T \log(1/\d)}
        }
        \le
        \d
    \end{equation}

    Now we examine the expectation of the l.h.s. of the inequality of the probability in \eqref{eq:mult:2} for a round $t$ which can be re-written as
    \begin{equation*}
        \Ex{W'[t]\left( K[t] + \frac{(1 - q)\b'}{1 - \b'}(K[t] - 1) \right)}
        =
        % \Ex{\req[t]}
        q
        \Ex{W'[t]\left( K[t] + \frac{(1 - q)\b'}{1 - \b'}(K[t] - 1) \right) \bigg| \req[t] = 1}
    \end{equation*}

    In the above we use the conditional independence of $W'[t]$ and $K[t]$ given the event $\req[t] = 1$ along with the fact that $\Ex{K[t] | \req[t] = 1} = \frac{(1 - q)\b'}{q(1 - \b')}$ to get
    \begin{alignat*}{3}
        \Line{
            \Ex{W'[t]\left( K[t] + \frac{(1 - q)\b'}{1 - \b'}(K[t] - 1) \right)}
        }{=}{
            q
            \Ex{W'[t] \big| \req[t] = 1}
            \frac{(1-q)^2 \b'}{q (1-\b')^2}
        }{}
        \\
        \Line{}{=}{
            q
            \Ex{W'[t] \big| \req[t] = 1}
            \Ex{K[t] \big| \req[t] = 1}
            \frac{1-q}{1-\b'}
        }{}
        \\
        \Line{}{=}{
            \Ex{W'[t] K[t]}
            \frac{1-q}{1-\b'}
        }{}
    \end{alignat*}

    Using the above, we get another high probability bound: for every $\d > 0$
    \begin{equation}\label{eq:mult:3}
        \Pr{
            \frac{1-q}{1-\b'} \sum_t W'[t] K[t]
            \le
            \sum_t \Ex{W'[t]\left( K[t] + \frac{(1 - q)\b'}{1 - \b'}(K[t] - 1) \right)}
            - \kmax\sqrt{2T \log(1/\d)}
        }
        \le
        \d
    \end{equation}

    Combining with a union bound \eqref{eq:mult:2} and \eqref{eq:mult:3} we get that for all $\d > 0$
    \begin{equation*}
        \Pr{
            \frac{1-q}{1-\b'} \sum_t W'[t] K[t]
            \le
            \frac{(1 - q)\b'}{1 - \b'}
            \left(T - \sum_t\blk[t]\right)
            - \kmax\sqrt{8 T \log(1/\d)}
        }
        \le
        2\d
    \end{equation*}
    or equivalently that
    \begin{equation*}
        \Pr{
            \sum_t W'[t] K[t]
            \le
            \b'
            \left(T - \sum_t\blk[t]\right)
            - \kmax \frac{1-\b'}{1-q} \sqrt{8T \log(1/\d)}
        }
        \le
        2\d
    \end{equation*}

    We now use \eqref{eq:mult:61} in the above to get that with probability at least $1 - 2\d$ it holds that 
    \begin{equation*}
        \sum_t W'[t] K[t]
        \ge
        \min\left\{
            T \frac{\a}{r} ,
            \b'
            \left(T - \sum_t\blk[t]\right)
        \right\}
        - \kmax \left( \frac{1-\b'}{1-q} \sqrt{8T \log(1/\d)} + 2 \right)
    \end{equation*}
    
    \cref{lem:mult:dmmf} in the above implies that with probability at least $1 - 2\d$
    \begin{align*}
        \sum_t W[t] K[t]
        & \ge
        \min\left\{
            T \frac{\a}{r} ,
            \b' T - \b' \frac{1-\a}{\a} \sum_t W[t] K[t],
            \b' T \left( 1 -\frac{1-\a}{r}\right)
        \right\}
        \\
        & \qquad
        - \kmax \left( \frac{1-\b'}{1-q} \sqrt{8T \log(1/\d)} + 2 + \b' \frac{1-\a}{\a}\right)
    \end{align*}

    Each of the three terms in the minimum provides a bound for $\sum_t W[t] K[t]$. More specifically, the second one implies that
    \begin{equation*}
        \sum_t W[t] K[t]
        \ge
        \b' T - \b' \frac{1-\a}{\a} \sum_t W[t] K[t]
        - \kmax \left( \frac{1-\b'}{1-q} \sqrt{8T \log(1/\d)} + 2 + \b' \frac{1-\a}{\a}\right)
    \end{equation*}

    Solving the above using $\sum_t W[t] K[t]$ and substituting in the original bound we get
    \begin{align*}
        \sum_t W[t] K[t]
        & \ge
        T \min\left\{
            \frac{\a}{r} ,
            \frac{\b'}{\a + \b' - \a\b'},
            \b' \left( 1 -\frac{1-\a}{r}\right)
        \right\}
        \\
        & \qquad
        - \kmax \left( \frac{1-\b'}{1-q} \sqrt{8T \log(1/\d)} + 2 + \b' \frac{1-\a}{\a}\right)
    \end{align*}

    We can simplify the above if we use the fact that $q \le b'$ and that the middle of the three terms in the minimum can never be strictly less by both the other two
    \begin{align*}
        \sum_t W[t] K[t]
        \ge
        T \min\left\{
            \frac{\a}{r} ,
            \b' \left( 1 -\frac{1-\a}{r}\right)
        \right\}
        - \kmax \left( \sqrt{8T \log(1/\d)} + 2 + \b' \frac{1-\a}{\a}\right)
    \end{align*}

    Now, taking the expectation of the above, we get
    \begin{align*}
        \sum_t\Ex{W[t] K[t]}
        \ge
        T \min\left\{
            \frac{\a}{r} ,
            \b' \left( 1 -\frac{1-\a}{r}\right)
        \right\}
        - \kmax O\left( \sqrt T \right)
    \end{align*}
    
    Using \cref{lem:mult:bpb} we get
    \begin{align*}
        \sum_t\Ex{U[t]}
        \ge
        \vstar(\b) T \min\left\{
            \frac{\a}{\b' r} ,
            \left( 1 -\frac{1-\a}{r}\right)
        \right\}
        - \kmax O\left( \sqrt T \right)
    \end{align*}
    which proves the lemma using the fact that $\b' \le \b$.
\end{proof}
\section{Deferred proofs of Section \ref{sec:imp}}
\label{sec:app:imp}

In this section, we include the deferred proofs of the theorems of \cref{sec:imp}.

\begin{proof}[Proof of \cref{thm:imp:ind}]
    Consider $n$ agents with equal fair shares, $\a_i = 1/n$ for all $i \in [n]$.
    All the agents have independent values across time, sampled by the same Bernoulli distribution: for all $i,t$, independent of other rounds and agents
    \begin{equation*}
        V_i[t] =
        \begin{cases}
            1, & \text{ with probability } \frac{1}{n}
            \\
            0, & \text{ with probability } 1 - \frac{1}{n}
        \end{cases}
    \end{equation*}

    It is not hard to notice that for any allocation (and therefore mechanism) in any round $t$
    \begin{equation*}
        \Ex{\sum_i U_i[t]}
        \le
        1 - \qty(1 - \frac{1}{n})^n
    \end{equation*}
    since the probability that every agent will have a zero value in round $t$ is $\qty(1 - \frac{1}{n})^n$.
    Summing over $t \in [T]$ the above gives us
    \begin{equation*}
        \sum_i \sum_t \Ex{U_i[t]}
        \le
        \qty( 1 - \qty(1 - \frac{1}{n})^n ) T
    \end{equation*}
    which in turn implies
    \begin{equation*}
        \min_i \sum_t \Ex{U_i[t]}
        \le
        \frac{1}{n} \qty( 1 - \qty(1 - \frac{1}{n})^n ) T
    \end{equation*}

    Since the ideal utility of every agent is $\vstar_i(1/n) = 1/n$, this proves that there must exist an agent $i$ such that
    \begin{equation*}
        \sum_t \Ex{U_i[t]}
        \le
        \qty( 1 - \qty(1 - \frac{1}{n})^n ) \vstar_i(1/n)T
    \end{equation*}

    Since $1 - \qty(1 - \frac{1}{n})^n \to 1 - \frac 1 e$ as $n \to \infty$ we get the theorem.
\end{proof}

\begin{proof}[Proof of \cref{thm:imp:dep}]
    The agent's Markov chain has $2$ states, with the following (right stochastic) transition matrix
    % \begin{equation*}
    %     P =
    %     \begin{pmatrix}
    %         \a + \g(1-\a) & (1-\g)(1-\a) \\
    %         (1-\g)\a & 1 - (1-\g)\a
    %     \end{pmatrix}
    % \end{equation*}

    % We also show the Markov Chain graphically in \cref{fig:imp}.

    \begin{figure}[ht]
        \centering
        \begin{tikzpicture}[->, >=stealth', node distance=2cm, line width=0.6pt]%[shorten>=2pt]
    \node[circle, draw](one){1};
    \node[circle, draw](two)[right of=one]{2};
    \path(one) edge[bend left] node[above]{$\g(1-\a)$} (two);
    \path(one) edge[loop left] node[left]{$1 - \g(1-\a)$} (one);
    \path(two) edge[bend left] node[below]{$\g\a$} (one);
    \path(two) edge[loop right] node[right]{$1 - \g\a$} (two);

    \node[](p)[left=3cm of one]{$
        P =
        \begin{pmatrix}
            1 - \g(1-\a) & \g(1-\a) \\
            \g\a & 1 - \g\a
        \end{pmatrix}
    $};
\end{tikzpicture}
        \caption{Markov chain for the proof of \cref{thm:imp:dep}.}
        % \label{fig:imp}
    \end{figure}

    Regarding the agent's value, it holds that
    \begin{equation*}
        V[t] =
        \begin{cases}
            1, & \text{ if } s[t] = 1
            \\
            0, & \text{ if } s[t] = 2
        \end{cases}
    \end{equation*}

    It is not hard to calculate that the stationary distribution is $\pi = (\a, 1-\a)$ which makes $\vstar(\a) = \a$. Let $\zeta = 1 - \g$. 
    For any integer $k \ge 1$, we can calculate
    \begin{equation*}
        P^k =
        \begin{pmatrix}
            \a + \zeta^k(1 - \a) & (1 - \zeta^k)(1 - \a) \\
            (1 - \zeta^k)\a & 1 - (1 - \zeta^k)\a
        \end{pmatrix}
    \end{equation*}

    Let $\ell = \lfloor \frac{1-\a}{\a} \rfloor \ge \frac{1-\a}{\a} - 1$. 
    The adversary will request the next $\ell$ rounds whenever the agent wins a round.
    By the properties of the \dmmf mechanism, the adversary will win the item in each of those rounds, regardless of the agent's strategy.

    Let $t$ be a round where the agent gets positive utility, i.e., $V[t] = 1$ (and therefore $S[t] = 1$), and she wins. Let $t + L$ be the first round after $t$ that the agent has positive value and the adversary is not winning that round, i.e., $L = \min\{\tau > \ell: s_{t + \tau} = 1\}$. We notice that 
    \begin{alignat*}{3}
        \Line{
            \Ex{L}
        }{=}{
            (\ell + 1)\Pr{L = \ell + 1}
            +
            \sum_{k = \ell + 2}^\infty k\Pr{L = k}
        }{}
        \\
        \Line{}{=}{
            (\ell + 1)\Pr{L = \ell + 1}
            +
            \sum_{k = \ell + 2}^\infty k
                \Pr{s_{t + \ell + 1} = \ldots = s_{t+k-1} = 2}
                \Pr{s_{t+k} = 1 | s_{t+k-1} = 2} 
        }{}
        \\
        \Line{}{=}{
            (\ell + 1) \left( \a + \zeta^{\ell + 1}(1-\a) \right)
            +
            \sum_{k = \ell + 2}^\infty k
                (1-\zeta^{\ell+1}) (1 - \a) \left( 1 - (1-\zeta)\a \right)^{k - \ell - 2}
                (1-\zeta)\a
        }{}
        \\
        \Line{}{=}{
            \frac{1 - \a \zeta \left(\ell + 1 - \zeta^\ell \right) + \a\ell - \zeta^{\ell + 1}}
            {\a (1 - \zeta)}
            % \frac{
            %     1
            %     - \g  \left(\a - \g^\ell(1 - \a) \right)
            %     - \a (1 - \g) \ell}
            % {\a (1 - \g)}
        }{}
        \\
        \Line{}{\ge}{
            \frac{(1 - \a) \left(\zeta(2 - \zeta) - \zeta^{1/\a} \right)}
            {\a  \zeta (1 - \zeta)}
        }{}
    \end{alignat*}
    where in the last inequality we used that $\ell \ge \frac{1-\a}{\a} - 1$.
    This entails that the frequency with which the agent can win rounds with positive value (and therefore her expected per-round utility) is at most
    % \gfcomment{this holds by talking a Markov Chain argument that is lengthy so I will write it later}
    \begin{equation*}
        \frac{1}{\Ex{L}}
        \le
        \frac{\zeta(1-\zeta)}{(1-\a)\left(\zeta(2-\zeta) - \zeta^{1/\a}\right)} \a
    \end{equation*}
    
    Substituting $\zeta = 1 - \g$ and using the fact that this argument is not correct for the last time the agent wins (which introduces the additive $O(1)$ error) proves the main part of the theorem.

    For the last part of the theorem, it is easy to see that if $\a \le 1/2$, then $\min_s \pi(s) = \a$.
\end{proof}

\begin{proof}[Proof of \cref{cor:imp:markov}]
    We want to bound
    \begin{equation}\label{eq:imp:21}
        \frac{\g}{(1-\a)\left( 1 + \g - (1 - \g)^{\frac{1 - \a}{\a}} \right)}
    \end{equation}

    For part (1), if we take $\a \to 0$ then, since $1 - \g < 1$, it holds that $(1 - \g)^{\frac{1-\a}{\a}} \to 0$. This proves the desired bound.

    For part (2), we have that
    \begin{equation*}
        \lim_{\g\to 0}
        \frac{\g}{1 + \g - (1 - \g)^{\frac{1 - \a}{\a}}}
        =
        \lim_{\g\to 0}
        \frac{1}{ 1 + \frac{1 - \a}{\a}(1 - \g)^{\frac{1 - \a}{\a}-1}}
        =
        \frac{1}{1 + \frac{1 - \a}{\a}}
        =
        \a
    \end{equation*}
    where in the second equality we used L'Hopital's rule. This proves the desired bound
\end{proof}

% \section{Deferred proof of Theorem~\ref{thm:mult:imp}}
% \label{sec:app:imp:mult}

\begin{proof}[Proof of \cref{thm:mult:imp}]
    Fix a $\b\in (0, 1]$.
    The agent has $(V[t],K[t]) = (1, 1)$ with probability $\b \in (0, 1]$; otherwise her value is $0$. It is not hard to see that $\vstar(\b) = \b$.
    
    We will assume that the adversary asks for $\kmax$ rounds every round.
    Let $\blk[t] = 1$ denote that the adversary holds the item from a previous round, which means the agent cannot win the item in round $t$ even if she requests it.
    
    We notice that the agent does not have anything to gain by requesting multiple rounds: requesting $k$ rounds on round $t$ and winning can be simulated by requesting the item for a single round on rounds $t,t+1,\ldots,t+k-1$, since the agent will satisfy the fairness constrain on all the rounds. Thus, from now on, we assume that the agent only makes single-round demands.
    Using this, we have:
    \begin{alignat*}{3}
        \Line{
            U[t]
        }{=}{
            V[t] \One{\text{agent wins}}
        }{}
        \\
        \Line{}{=}{
            V[t]
            \One{\text{agent requests}}
            (1 - \blk[t])
            \One{\text{\dmmf criterion favors agent}}
        }{}
        \\
        \Line{}{\le}{
            V[t] (1 - \blk[t])
            % \One{\frac{A[t-1] + 1}{\a} \le \frac{t-1 - A[t-1] + \kmax}{1-\a}}
        }{}
    \end{alignat*}

    Since $V[t]$ is independent from $\blk[t]$ we have that
    \begin{equation*}
        \Ex{U[t]}
        \le
        \b \Ex{1 - \blk[t]}
        =
        \vstar(\b) \Ex{1 - \blk[t]}
    \end{equation*}

    We are going to prove that $\sum_t \blk[t] \ge \frac{\kmax - 1}{\kmax} \frac{1-\a}{r} T - (\kmax - 1)$. The above inequality implies the theorem by summing over all $t$.

    To prove that $\sum_t \blk[t] \ge \frac{\kmax - 1}{\kmax} \frac{1-\a}{r} T - (\kmax - 1)$, first let $A[t]$ and $A'[t]$ be the total allocations of the agent and adversary, respectively. We notice that
    \begin{equation*}
        \sum_{t = 1}^T \blk[t]
        =
        \frac{\kmax - 1}{\kmax} A'[T]
    \end{equation*}

    Now proving that $\sum_t \blk[t] \ge \frac{\kmax - 1}{\kmax} \frac{1-\a}{r} T - (\kmax - 1)$ is equivalent with proving $A'[T] + \kmax \ge \frac{1-\a}{r}T$.

    Toward a contradiction assume that $A'[T] + \kmax < \frac{1-\a}{r}T$.
    This implies that the item gets allocated every round up to round $T' := T - \kmax + 1$: since the adversary's demands never go above her limit, $\frac{1-\a}{r}T$, the adversary requests the item every round (except possible the last $\kmax - 1$ rounds), meaning that it gets allocated to either the agent or the adversary.
    This implies that $A[T'] + A'[T'] \ge T'$.
    Now let $t$ be the final round in which the agent won the item while the adversary also made a request. From the fairness criterion of the \dmmf mechanism, we have that
    \begin{equation*}
        \frac{A[t]}{\a}
        \le
        \frac{A'[t] + \kmax}{1 - \a}
        % \le
        % \frac{A'[T - \kmax + 1] + \kmax}{1 - \a}
    \end{equation*}
    
    Since $t$ is the final round when the agent won by competing with the adversary and the adversary asks the item in every round except the final $\kmax - 1$ ones, it holds that $t \le T'$ making
    $A[T'] = A[t]$ and $A'[T'] \ge A'[t]$. This makes the above
    \begin{equation*}
        \frac{A[T']}{\a}
        \le
        \frac{A'[T'] + \kmax}{1 - \a}
    \end{equation*}

    Using the fact that $A[T'] + A'[T'] \ge T'$ the above becomes
    \begin{equation*}
        \frac{T' - A'[T']}{\a}
        \le
        \frac{A'[T'] + \kmax}{1 - \a}
    \end{equation*}
    which implies
    \begin{equation*}
        A'[T']
        \ge
        T (1 - \a) - \a \kmax
    \end{equation*}

    This leads to a contradiction since $A'[T'] = A'[T]$, $\a < 1$ and we had assumed that $A'[T] + \kmax \le T \frac{1 - \a}{r}$ and $r \ge 1$.
\end{proof}
\section{Conclusion}

We perform an extensive study of \dmmf: a \emph{simple} mechanism for public resource sharing that, we show, is $(i)$ \emph{individually robust}: each agent receives a guaranteed utility, irrespective of how others behave; $(ii)$ \emph{globally efficient}: agents get close to optimal utility under mild assumptions on value distributions, and $(iii)$ \emph{universal}: guarantees hold in many settings, including correlated demands. 

We believe our work takes a significant step towards reinforcing the success of \dmmf in practice. Moreover, we hope our techniques for getting distribution-dependent robustness guarantees and handling correlated values can prove useful in other dynamic mechanism design problems.

\printbibliography{}

\appendix

\end{document}